\documentclass[journal,10pt]{IEEEtran}
\usepackage{amsthm}
\usepackage{dsfont}
\usepackage{amsfonts}
\usepackage{amssymb}
\usepackage{mathrsfs}
\usepackage{mathtools}
\usepackage[english]{babel}
\usepackage{float}
\usepackage[pdftex]{graphicx}
\usepackage{epstopdf}
\usepackage{amsmath}
\usepackage{color}
\usepackage{multirow}
\usepackage{graphicx}
\usepackage[switch]{lineno}
\usepackage[named]{algo}
\usepackage{algpseudocode}
\usepackage{algorithm}
\usepackage{algorithmicx}
\usepackage{bm}
\usepackage{stfloats}
\usepackage{lipsum}
\usepackage[inline]{enumitem}
\usepackage{microtype}
\usepackage{cuted}
\usepackage[caption=false,font=footnotesize]{subfig}
\usepackage{cite}
\usepackage{hyperref}
\usepackage{tikz}
\usetikzlibrary{arrows}
\usepackage{siunitx}
\DeclareMathAlphabet\mathbfcal{OMS}{cmsy}{b}{n}
\newtheorem{theo}{Theorem}
\newtheorem{lem}{Lemma}

\theoremstyle{definition}
\newtheorem{rmk}{Remark}
\newtheorem{Def}{Definition}

\newcommand{\EX}{{\mathbb{E}}}          
\newcommand{\IDM}{{\mathbf{I}}}         
\newcommand{\DIAG}{{\mathrm{diag}}}     
\newcommand{\RE}{{\mathrm{Re}}}         
\newcommand{\IM}{{\mathrm{Im}}}         
\newcommand{\SGN}{{\mathrm{sgn}}}       
\newcommand{\SINC}{{\mathrm{sinc}}}     
\newcommand{\J}{{\mathrm{j}}}
\newcommand{\AM}{\mathbf{A}}

\newcommand{\FM}{\mathbf{F}}
\newcommand{\GM}{\mathbf{G}}

\newcommand{\SM}{\mathbf{S}}
\newcommand{\TM}{\mathbf{T}}
\newcommand{\VM}{\mathbf{V}}
\newcommand{\WM}{\mathbf{W}}

\newcommand{\AV}{\bm{a}}
\newcommand{\BV}{\mathbf{b}}
\newcommand{\CV}{\mathbf{c}}

\newcommand{\GV}{\bm{g}}
\newcommand{\HV}{\mathbf{h}}
\newcommand{\NV}{\bm{n}}
\newcommand{\RV}{\bm{r}}
\newcommand{\QV}{\bm{q}}
\newcommand{\SV}{\mathbf{s}}

\newcommand{\UV}{\bm{u}}

\newcommand{\WV}{\mathbf{w}}
\newcommand{\XV}{\mathbf{x}}
\newcommand{\YV}{\mathbf{y}}
\newcommand{\ZV}{\mathbf{z}}
\newcommand{\SIGMA}{\mathbf{\Sigma}}
\newcommand{\THETA}{\boldsymbol{\theta}}
\newcommand{\PHI}{\boldsymbol{\phi}}
\newcommand{\MU}{\boldsymbol{\mu}}

\newcommand{\LAMBDA}{\boldsymbol{\lambda}}

\newcommand{\ALPHA}{\boldsymbol{\alpha}}
\newcommand{\TAU}{\boldsymbol{\tau}}
\newcommand{\GAMMA}{\boldsymbol{\gamma}}

\newcommand{\EPSILON}{\boldsymbol{\epsilon}}
\newcommand{\ZEROVV}{\mathbf{0}}
\newcommand{\ONEVV}{\mathbf{1}}
\newcommand{\KSET}{{\mathds{K}}}
\newcommand{\SUBJTO}{{\textrm{subject to}}}
\newcommand{\MIN}{{\textrm{minimize}}}
\newcommand*{\Scale}[2][4]{\scalebox{#1}{$#2$}}%
%

\hyphenation{op-tical net-works semi-conduc-tor}

\begin{document}

\title{Localization with One-Bit Passive Radars in Narrowband Internet-of-Things using Multivariate Polynomial Optimization}

\author{\IEEEauthorblockN{Saeid~Sedighi,~\IEEEmembership{Student Member~IEEE,} Kumar Vijay Mishra,~\IEEEmembership{Senior Member~IEEE,} M.~R.~Bhavani~Shankar,~\IEEEmembership{Senior Member~IEEE,} and~Bj\"{o}rn~Ottersten,~\IEEEmembership{Fellow~IEEE}\\}
\thanks{
S. S. acknowledges support via Luxembourg National Research Fund (FNR) under the ACCORDION project
(reference number 11228830). Other authors acknowledge partial support via ERC AGNOSTIC (Grant ID: 742648) and the FNR under the BRIDGES project AWARDS.

The authors are with the Interdisciplinary Centre for Security, Reliability
and Trust (SnT), University of Luxembourg, Luxembourg City L-1855, Luxembourg. E-mail: \{saeid.sedighi@, kumar-mishra@ext, bhavani.shankar@, bjorn.ottersten@\}uni.lu.

The conference precursor of this work was presented at the 2019 IEEE International Workshop on Computational Advances in Multi-Sensor Adaptive Processing (CAMSAP).
}
}

\maketitle

\begin{abstract}
Several Internet-of-Things (IoT) applications provide location-based services, wherein it is critical to obtain accurate position estimates by aggregating information from individual sensors.
In the recently proposed narrowband IoT (NB-IoT) standard, which trades off bandwidth to gain wide coverage, the location estimation is compounded by the low sampling rate receivers and limited-capacity links. We address both of these NB-IoT drawbacks in the framework of passive sensing devices that receive signals from the target-of-interest. We consider the limiting case where each node receiver employs one-bit analog-to-digital-converters and propose a novel low-complexity nodal delay estimation method using constrained-weighted least squares minimization.
To support the low-capacity links to the fusion center (FC), the range estimates obtained at individual sensors are then converted to one-bit data. At the FC, we propose target localization with the aggregated one-bit range vector using both optimal and sub-optimal techniques. The computationally expensive former approach is based on Lasserre's method for multivariate polynomial optimization while the latter employs our less complex iterative joint r\textit{an}ge-\textit{tar}get location \textit{es}timation (ANTARES) algorithm.  Our overall one-bit framework not only complements the low NB-IoT bandwidth but also supports the design goal of inexpensive NB-IoT location sensing. Numerical experiments demonstrate feasibility of the proposed one-bit approach with a $0.6$\% increase in the normalized localization error for the small set of $20$-$60$ nodes over the full-precision case. When the number of nodes is sufficiently large ($>80$), the one-bit methods yield the same performance as the full precision.
\end{abstract}

\begin{IEEEkeywords}
Fractional optimization, localization, narrowband internet-of-things, one-bit quantization, passive radar.
\end{IEEEkeywords}
 
\IEEEpeerreviewmaketitle

\section{Introduction}
\label{sec:intro}
Recent industry estimates project that nearly 75 billion devices will be connected in the Internet-of-Things (IoT) by the year 2025 \cite{ikpehai2018low}. The IoT is envisioned to connect the physical and digital world through extensive instrumentation with sensing, wearable, and intelligent devices \cite{sisinni2018industrial}. A common IoT application is to provide various \textit{localization-based services} \cite{khelifi2019survey,shit2018location}, wherein a large network of devices collects and transmits data to determine the position of entities-of-interest with respect to a node or sensor within the IoT. The location information is critical in order to  gather crucial inference from physical measurements in applications such as military surveillance \cite{kott2016internet}, physiological sensors \cite{rong2019active},  smart homes \cite{dorri2017blockchain}, 
disaster response \cite{han2019harnessing}, and environmental monitoring \cite{mishra2020deep}. 

Global Positioning System (GPS) devices are quite reliable in providing localization measurements in other applications. However, GPS deployment at every IoT node is very expensive in terms of cost and power, especially for networks with massive number of devices. Further, GPS performs poorly in indoor environments. Therefore, many alternative IoT localization methods have been proposed in recent studies \cite{shit2018location}. A promising technology is passive sensor tags that augment existing IoT deployments through backscatter communications \cite{mishra2019multi}. These tags do not have any \textit{active} radio-frequency (RF) chain components thereby leading to huge savings in cost and energy. This is also a practical approach because it is difficult to re-purpose the preset IoT network sensing modalities (usually fixed before the deployment), especially when it comprises millions of devices \cite{perez2016augmenting}. On the other hand, addition of passive sensors does not require changing the deployed IoT hardware or placement of new communications and power sources \cite{ensworth2015every}. 

Since the IoT framework is defined by a massive number of largely battery-powered devices, that also transmit or receive data, the underlying challenges for any communications link in this setting are low power, low data rate, wide coverage, and scalability \cite{xu2017narrowband}. In this context, the 3rd generation partnership project (3GPP) recently introduced narrowband IoT (NB-IoT) system specifications to support wide coverage area, long user lifetime, and low power/cost devices over a narrow bandwidth of 180 kHz \cite{yang2017narrowband}. While not fully backward compatible with existing 3GPP devices, the NB-IoT harmoniously coexists with legacy networks by reusing the functionalities of the latter's design. The reduced NB-IoT bandwidth implies higher transmit power spectral density within the existing 3GPP specifications. This, combined with a soft re-transmission strategy \cite{zhang2019spectrum}, enhances the coverage of NB-IoT over conventional IoT solutions. The ultra-low complexity and low power consumption features of NB-IoT are advantageous for location-based services such as smart parking, smart tracking, and smart home \cite{kellogg2015wi}. In this paper, we focus on passive localization in NB-IoT networks. 

While NB-IoT networks benefit from low bandwidth to enhance their coverage, the same feature imposes challenges in localization by severely limiting the data rate. Commonly used ranging-based localization techniques lose accuracy because of low data rates \cite{gezici2005localization}. In NB-IoT devices, low battery-power is insufficient to handle high sampling rates required to attain necessary localization accuracy \cite{song2017csi,hu2017improving,hu2019time,jeon2018effective}. A popular alternative NB-IoT localization technique is to employ \textit{fingerprinting}, wherein the received signal strength indicator (RSSI) measurements are collected at specified locations during the training phase and then compared with online measurements to determine the location of the target \cite{song2017csi,sallouha2017localization}. However, this approach requires prior knowledge of a detailed RSSI database which may be unavailable or unattainable.
Hence, recent NB-IoT studies explore RSSI-independent signal processing methods such as successive interference cancellation \cite{hu2017improving}, maximum likelihood estimation \cite{hu2019time}, frequency hopping \cite{jeon2018effective} and machine learning \cite{sallouha2019localization}. Our proposed technique is inspired by localization in passive radar \cite{noroozi2015target} not requiring prior RSSI measurements.

The aforementioned works assume that measurements at each node are digitally represented by a large number of bits per sample such that the resulting quantization errors can be neglected. Further, when nodal measurements are sent to a fusion center (FC) for an aggregate decision, full capacity links are assumed. In this paper, contrary to these works, we consider the limiting case wherein the receivers at each node employ one-bit analog-to-digital converters (ADCs), which directly convert node measurements into \emph{complex} data with binary components, each containing one-bit information, by comparing the real and imaginary parts of the node measurements with appropriate thresholds separately and noting the sign. This leads to one-bit per component measurements. Considering the fact that the cost and power consumption of ADCs increase exponentially with the number of quantization bits and sampling frequency \cite{mishra2019toward}, the use of one-bit ADCs supports the low-cost and low-power-consumption features of NB-IoT. We then leverage the recent advances in one-bit signal processing \cite{li2018survey} to estimate the target \textit{range/delay} with respect to a specific node. To cope with the capacity limitations of the nodal links, we assume that, prior to transmission to FC, the receive sensors quantize nodal estimates to one-bit data. The FC then performs target \textit{localization}, i.e. determination of target's position with respect to the entire network, using the one-bit range vector
aggregated from the estimates sent by all the nodes.

Converting analog signals into digital data using a single bit per sample
leads to significant errors in the digital approximation of the original analog signals. This necessitates development of new algorithms for information retrieval from one-bit samples.
One-bit sampling has a rich heritage of research in statistical signal processing \cite{dabeer2008multivariate,Host-Madsen2000,dabeer2006signal} 
and signal reconstruction \cite{cvetkovic2007single}. It was shown in \cite{cvetkovic2007single} that, for band-limited bounded-amplitude square-integrable input signals, a sufficient number of one-bit samples lead to recovery of full-precision data with locally bounded point-wise error, resulting in an exponentially decaying distortion-rate characteristic. In the past few years, one-bit signal processing has received significant attention in numerous modern applications such as array processing \cite{bar2002doa,sedighi2020one}, massive multiple-input multiple-output (MIMO) \cite{Pirzadeh2017}, deep learning \cite{elbir2019joint}, dictionary learning \cite{zayyani2015dictionary}, and radar \cite{zhao2018deceptive}. Most of these works are based on either well-known Bussgang's Theorem \cite{bussgang1952crosscorrelation,bar2002doa,Pirzadeh2017} or compressive sensing techniques \cite{knudson2016one,elbir2019joint,zayyani2015dictionary,zhao2018deceptive}. Further, there are some elegant works on colocated one-bit radar and array processing \cite{ameri2019one,sedighi2020one} which formulate the parameter estimation from one-bit measurements as an optimization problem with linear constraints which can be solved by polynomial-time algorithms. Contrary to previous works on colocated one-bit radar \cite{ameri2019one}, our proposed method investigates widely separated radar setting.

We first formulate the problem of range/time-delay estimation in a clutter-free environment from one-bit samples received by each NB-IoT sensor as a sparse recovery problem. The formulation and approach of the clutter-free scenario is
effectively applicable in a weak clutter environment but the impact of strong clutter is unexamined and left for the future work. We show that, unlike infinite precision sampling, oversampling could improve the range/delay estimation performance in one-bit sampling. Further, oversampling leads our proposed approach to be able to achieve a considerably high resolution for time-delay estimation despite the narrow bandwidth used in NB-IoT. Toward dealing with the capacity limitations of the backhaul links, we assume that each sensor forwards an one-bit conversion of their range measurements to the FC. Collecting these one-bit measurements at the FC, we formulate the passive localization problem using the bistatic range-difference model. Note that the passive localization with NB-IoT sensors has a model similar to that of a passive radar \cite{noroozi2015target}. The passive radar localization has been considered in \cite{noroozi2015target} in the high-resolution ADC framework in which full-precision range measurements are assumed. This usually results in a system of several equations that are solved conventionally by the least squares (LS) method.
In this context, apart from application to NB-IoT localization, 
ours is the first work in the context of one-bit sampling in a passive and distributed radar setting.

In our bistatic range-difference model, recovering locations from one-bit samples requires minimizing a cost function that is a non-negative polynomial in range measurement variables and subjected to polynomial inequalities defined by the positive-valued samples (the one-bit range measurements). The general approach to solving this problem is to re-cast the feasibility of this finite system of polynomial constraints in terms of an equivalent polynomial that involves squares of (unknown) polynomials \cite{shor1987class}. However, it is rather difficult to express a non-negative multivariate polynomial as a sum-of-squares. To address this, we employ Lasserre's general solution approach for polynomial optimization problems via semi-definite programming (SDP) using methods based on moment theory \cite{lasserre2001global}. Our novel formulation jointly estimates the full-precision data as well as the target location. While this method could attain the global minimum, its computational complexity grows considerably with increase in the number of NB-IoT sensors. In order to reduce the computational complexity, we trade accuracy with complexity by proposing a novel sub-optimal iterative joint r\textit{an}ge-\textit{tar}get location \textit{es}timation (ANTARES) algorithm. We also derive the Cram\'{e}r-Rao bound (CRB) for localization with one-bit nodal range measurements and use it as benchmark for assessing the estimation performance of the proposed optimal and sub-optimal algorithms. Numerical results show that when sufficiently large number of NB-IoT nodes are available, the optimal approach yields same performance as the full-precision and ANTARES leads to only $0.43$\% increase in the normalized localization error. Further, the normalized localization error rises minimally by $2.2$\% and $0.6$\% for a smaller set of $20$-$60$ nodes using ANTARES and optimal algorithm, respectively, over the full precision case.

Preliminary results of this work appeared in our conference publication \cite{sedighi2019localization}, where performance analysis was not included and only Lasserre's approach was considered. In this paper, we also investigate the one-bit time-delay estimation for the oversampled scenario and present ANTARES algorithm. In summary, our work provides a robust framework for location-based services in NB-IoT, does not require prior RSSI measurements, performs target delay estimation with one-bit samples, yields localization using limited capacity links, and is computationally efficient. Further, our work also has connections with the recent developments in spectrum sharing and joint radar-communications (JRC) design \cite{mishra2019toward,dokhanchi2019mmwave}. Unlike some recent works \cite{liu2020codesign} where new waveforms are developed for distributed JRC, our work exploits existing NB-IoT signaling for a sensing application.

The rest of the paper is organized as follows. In the next section, we describe the system and signal model of the passive localization problem via the NB-IoT sensors.
We introduce our one-bit nodal range estimation algorithm in Section~\ref{sec:time-delay estimation}. Then, using these estimates, we localize the target at FC in Section~\ref{sec:localization} through a polynomial optimization. We validate our models and methods through numerical experiments in Section~\ref{sec:numexp} before concluding in Section~\ref{sec:summ}.

Throughout this paper, we refer the vectors and matrices by lower- and upper-case bold-face letters, respectively. The superscripts $(\cdot)^T$ and $(\cdot)^H$ indicate the transpose and Hermitian (conjugate transpose) operations, respectively. $[\AM]_{i,j}$ and $[\AV]_i$ indicate the $(i,j)$-th and $i$-th entry of $\AM$ and $\AV$, respectively. The notations $\|\AV\|_1$ and $\|\AV\|_2$ stand for $\ell_1$-norm and $\ell_2$-norm of the vector $\AV$, respectively; $|a|$ and $\lceil a \rceil$ represent the absolute value of and the least integer greater than or equal to the scalar $a$, respectively; the estimates of $\AV$ and $a$ are indicated by $\widehat{\AV}$ and $\widehat{a}$, respectively; superscript within parentheses as $(\cdot)^{(k)}$ indicates the value at $k$-th iteration; a diagonal matrix with the diagonal vector $\AV$ is $\DIAG(\AV)$; the real and imaginary parts of the complex number $a$ are $\RE\{a\}$ and $\IM\{a\}$, respectively; $\deg(.)$ is the degree of a polynomial; $\EX\{.\}$ stands for the statistical expectation; $\mathbf{I}_M$ is the $M \times M$ identity matrix; $\AM^{\dagger}$, $\Pi_{\AM}=\AM\AM^{\dagger}$ and $\Pi^{\bot}_{\AM} = \IDM_M - \AM\AM^{\dagger}$ indicate the pseudo-inverse, the projection matrix onto the range space and the projection matrix onto the null space of the full column rank matrix $\AM$, respectively; ${\cal R}(\mathbf{A})$ and ${\cal N}(\mathbf{A})$ represent the range and null spaces  of $\mathbf{A}$, respectively; $\AM \succeq \ZEROVV$ and $\AV \succeq \ZEROVV$ indicate a positive semidefinite matrix and a vector with all elements greater than or equal to zero, respectively. The symbol $\odot$ represents the Hadamard (element-wise) product and $\SGN(\cdot)$ stands for the sign function. The notation $\frac{\partial f}{\partial_{x}}$ is the partial derivative of the function $f$ with respect to the variable $x$.

\section{System Model}
\label{sec:model}
Consider a source, say, a communications base-station whose location in Cartesian coordinates is $\begin{bmatrix} \delta^x_b & \delta^y_b & \delta^z_b\end{bmatrix}^T \in \mathds{R}^{3 \times 1}$. The source transmits a known baseband single-tone NB-IoT signal $s(t) \in \mathds{C}$ with bandwidth $B$. 
As per NB-IoT specifications, the signal has spectrum limited to 180 kHz. It is similar to LTE with fewer (1, 3, 6, or 12) subcarriers with normal cyclic prefix \cite{xu2017narrowband,loulou2020multiplierless} and employs rotated phase shift keying (PSK) constellations, either $\pi/2$ binary PSK ($\pi/2$-BPSK) or $\pi/4$ quadrature PSK ($\pi/4$-QPSK). 
The resulting signal is\par\noindent\small
\begin{align}
s(t)=\sum_{k=0}^{N_c-1} a_{k} e^{\mathrm{j} k \frac{\pi}{M}} g(t-k T_c),\;\;\;0\le t < T,
\end{align}\normalsize
where $a_{k} \in \{\pm 1\}$ for $\pi/2$-BPSK and $a_{k} \in \{\pm 1, \pm \mathrm{j} \}$ for $\pi / 4$-QPSK are known pilot symbols, $M$ is the alphabet size (2 for $\pi/2$-BPSK and 4 for $\pi/4$-QPSK), $N_c$ is the maximum number of symbols allowed during the transmission, $T$ denotes the observation interval, $T_c$ is the symbol period, and $g(t)$ is the pulse shaping filter impulse response with bandwidth $B$.
\begin{figure}[t]
\centering
\includegraphics[width=1.0\columnwidth]{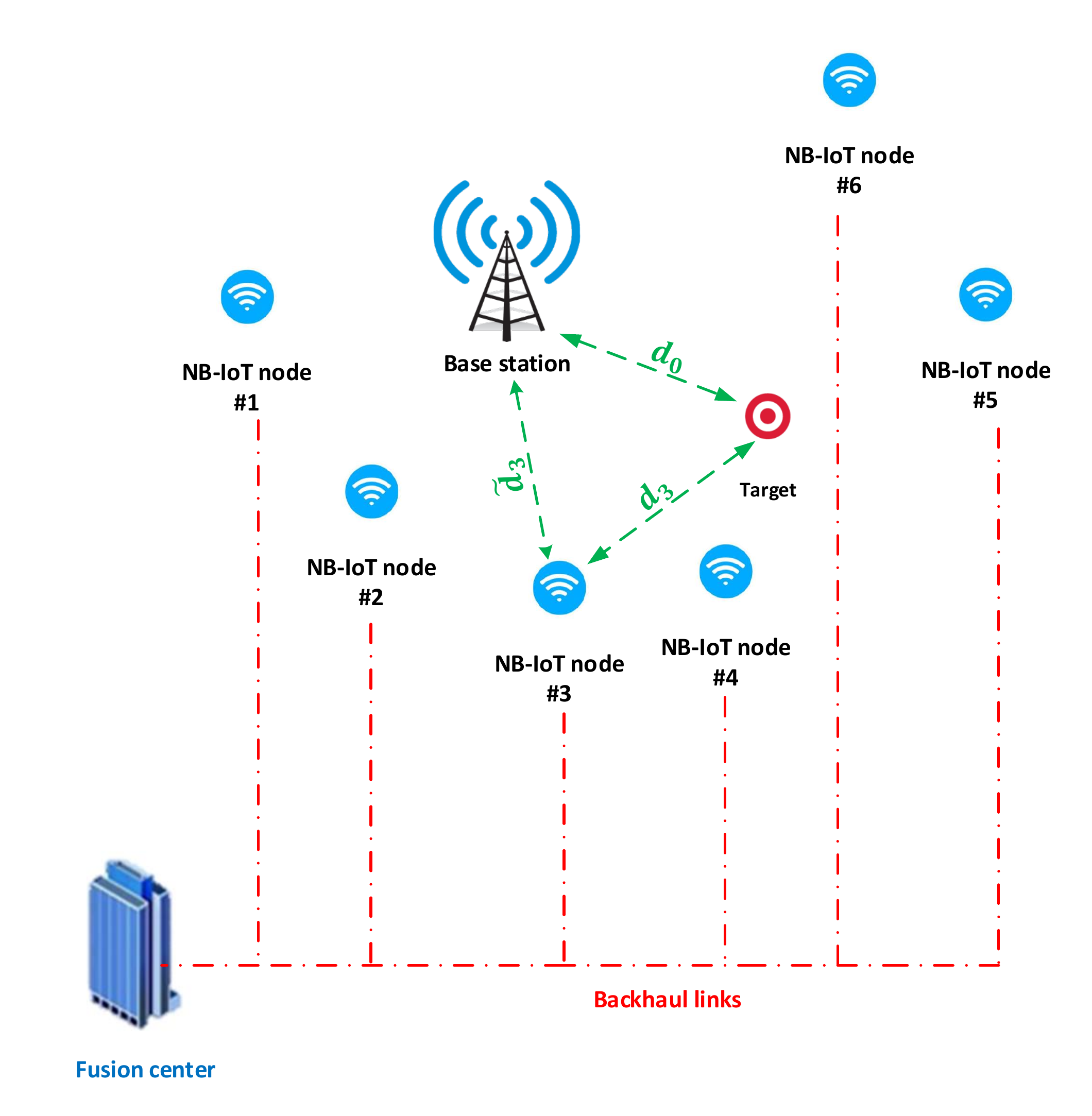}
\DeclareGraphicsExtensions.
\vspace*{-6mm}
\caption{Illustration of the localization scenario. The NB-IoT \#$1$, \#$2$, $\cdots$, nodes (blue) are passive sensors (located at distances $\tilde{d}_1,\tilde{d}_2, \cdots, \tilde{d}_6$ from the base station). The nodes receive the signal from the source bounced off from a target-of-interest (red) located at distances $d_1, d_1, \cdots, d_6$ from the nodes and $d_0$ from the base station. In our proposed model, the nodes employ one-bit ADCs to sample the received signal and estimate the range. The estimated range at each node is quantized and then forwarded to the FC for an aggregated estimate.}
\label{fig-1}
\end{figure}

The transmit signal is bounced off from the target-of-interest located at $\begin{bmatrix} \delta^x & \delta^y & \delta^z\end{bmatrix}^T \in \mathds{R}^{3 \times 1}$. In a typical NB-IoT setting, a target could be a subject carrying a mobile phone, an intelligent vehicle or a robot. The backscattered signal is then received by $M$ distinct NB-IoT sensor nodes. The location of the $m$-th node is $\begin{bmatrix} \delta^x_m & \delta^y_m & \delta^z_m \end{bmatrix}^T \in \mathds{R}^{3 \times 1}, m \in \mathds{M} \doteq \{1, 2, \cdots, M\}$.  These nodes are synchronized with the base-station (Fig.~\ref{fig-1}). Synchronization could be provided by sending a periodic synchronization signal from the base-station to the NB-IoTs, including timing information of the base-station, while the base-station maintains a constant clock using either receiving a reference time from GPS or an atomic clock. After receiving the the base-station timing information, NB-IoTs are able to accurately synchronize their clocks with the base-station clock \cite{Simeone2008,Wu2011,Ali2017,Zhang2017}. More detailed information about the periodic synchronization signal and the synchronization mechanism in NB-IoT systems are provided in \cite{Ali2017,Zhang2017}, and the references therein. Synchronization may be also achieved through the use of protocols such as IEEE 1588 generic precision time protocol (gPTP) \cite{ieee2008precision}, network time protocol (NTP) \cite{mills1991internet} and wireless PTP \cite{garg2017wireless}. These cost-effective clock synchronization protocols are also popular in other applications, including electrical grid networks, cellular base-station synchronization, industrial control, and vehicular systems \cite{karthik2020robust,wang2020displaced}.

If the distance between the source and the target is $d_0$ and that between the target and the $m$-th NB-IoT node is \par\noindent\small
\begin{align}
\label{eq-2}
d_m\!=\!\sqrt{(\delta^x_m-\delta^x)^2+(\delta^y_m-\delta^y)^2+(\delta^z_m-\delta^z)^2},\;1 \!\leq\! m \!\leq\! M,
\end{align}\normalsize
then the true target range with respect to the $m$-th NB-IoT node is
\par\noindent\small
\begin{align}
\label{eq-3}
r_m = d_m+ d_0,\;1 \!\leq\! m \!\leq\! M.
\end{align}\normalsize

The propagation is non-dispersive and the base-station signal received by the NB-IoT nodes includes a direct line-of-sight (LoS) path from the base-station to the nodes and an indirect non-LoS (NLoS) path from the base-station to the target and then to the nodes. The demodulated baseband analog signal received at $m$-th sensor is
\par\noindent\small
\begin{align}
\label{eq-analog-sig-recieve}
\breve{y}_m(t) = \widetilde{\alpha}_m s(t - \widetilde{\tau}_m) + \alpha_m s(t - \tau_m) + \breve{n}_m(t),
\end{align}\normalsize
where $\widetilde{\alpha}_m \in \mathds{C}$ ($\alpha_m \in \mathds{C}$) and $\widetilde{\tau}_m \in \mathds{R}$ ($\tau_m \in \mathds{R}$) are the attenuation coefficient and time-delay of the propagation channel for the direct (indirect) path, respectively; and $\breve{n}_m(t) \in \mathds{C}$ denotes additive white noise following a circular-symmetric complex Gaussian distribution with variance $N_m > 0$. The unknown time delay $\tau_m$ is linearly proportional to $r_m$, i.e. $\tau_m = r_m/c$ where $c=3\times 10^8$ m/s is the speed of light. The unknown direct path delay $\widetilde{\tau}_m$ is also linearly proportional to the distance between the $m$-th node and the base station. i.e., $\widetilde{\tau}_m = \widetilde{d}_m/c$ where $\widetilde{d}_m = \sqrt{(\delta^x_m-\delta^x_b)^2+(\delta^y_m-\delta^y_b)^2+(\delta^z_m-\delta^z_b)^2}$ denotes the distance between the $m$-th node and the base station.

The baseband signal is filtered by an ideal low-pass filter with bandwidth $B$ and frequency response
\par\noindent\small
\begin{align}
H(\Omega)= \left\{\begin{array}{ll} 1, & |\Omega| \leq 2\pi B, \\
0, & {\rm otherwise}.
\end{array}\right.
\end{align}\normalsize
This low-pass filtering of the signal $\breve{y}_m(t)$ yields
\par\noindent\small
\begin{align}
\label{eq-filter-sig-recieve}
y_m(t) = \widetilde{\alpha}_m s(t - \widetilde{\tau}_m) + \alpha_m s(t - \tau_m) + n_m(t),
\end{align}\normalsize
where $n_m(t)$ is the filtered noise trail whose auto-correlation is\par\noindent\small
\begin{align}
\label{eq-noise-autocorr}
 R_{n_m}(t_1\!-\!t_2) &\!=\! \frac{1}{2\pi} \int_{- \infty}^{\infty} N_m  | H(\Omega) |^2 e^{-\J \Omega (t_1-t_2)} {\rm d}\Omega \nonumber\\
 &\!=\! 2BN_m \SINC (2B(t_1\!-\!t_2)),
\end{align}\normalsize
where $\SINC(u) = \frac{\sin (\pi u)}{\pi u}$.

%
\begin{figure}[t]
\centering
\begin{tikzpicture}[scale=0.4]
\draw[-,dashed] (-7,-3.2) -- (-7,4);
\draw[-,dashed] (-2,-3.2) -- (-2,4);
\draw[-,dashed] (-7,-3.2) -- (-2,-3.2);
\draw[-,dashed] (-7,4) -- (-2,4);
\draw[->,semithick] (-10,2) -- (-6,2);
\draw [semithick] (-6,1) rectangle (-3,3);
\draw[->,semithick] (-4.5,-1) -- (-4.5,1);
\draw[->,semithick] (-3,2) -- (1,2);
\draw [semithick] (1,1) rectangle (4,3);
\draw[->,semithick] (4,2) -- (7,2);
\node[] at (-8.5,3){$y_m(t)$};
\node[] at (-0.5,3){$\YV_m$};
\node[] at (5.5,3){$\ZV_m$}; 
\node[] at (-4.5,2){CDC}; 
\node[] at (2.5,2){$Q(.)$}; 
\node[] at (-4.5,-2){$T_s = \frac{1}{2 \vartheta B}$};
\end{tikzpicture}
\caption{Conceptual representation of the oversampled one-bit ADC. The CDC block represents the digitizer operating at sampling rate of $1/T_s$. A quantizer $Q(\cdot)$ then converts the digital samples into a one-bit data stream.}
\label{fig-2}
\end{figure}
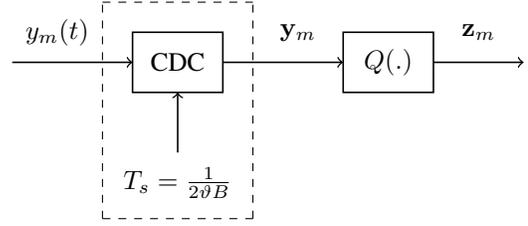
Each NB-IoT node is equipped with a one-bit ADC which admits binary samples of the corresponding $y_m(t)$  during the observation interval $[0, T)$. The ADC sampling frequency $f_s = \frac{1}{T_s} = 2\vartheta B$, where $\vartheta$ is an integer greater than or equal to one, referred to as the oversampling factor.
 Figure~\ref{fig-2} conceptually depicts a one-bit ADC which comprises a Continuous-to-Discrete Converter (CDC) with sampling frequency $f_s \!=\! 2 \vartheta B$ followed by a one-bit quantizer. The CDC produces $L \!=\! \frac{T}{T_s}=2 \vartheta BT$ discrete samples of $y_m(t)$ during the time interval $[0, T)$. Stacking all discrete samples produces a $\mathds{C}^{L \times 1}$ vector
\par\noindent\small
\begin{align}
\label{eq-samp-sig}
    \mathbf{y}_m = \widetilde{\alpha}_m \SV(\widetilde{\tau}_m) + \alpha_m \SV(\tau_m) + \NV_m,
\end{align}\normalsize
where $[\mathbf{y}_m]_l = y_m((l-1)T_s)$, $[\SV(\widetilde{\tau}_m)]_l = s((l-1)T_s - \widetilde{\tau}_m)$, $[\SV(\tau_m)]_l = s((l-1)T_s - \tau_m)$, and $[\NV_m]_l = n_m((l-1)T_s)$ for $l = 1, 2, \cdots, L$. From \eqref{eq-noise-autocorr} and Gaussianity of $n_m(t)$, vector $\NV_m$ follows a zero-mean complex Gaussian distribution with the covariance
\par\noindent\small
\begin{align}
\EX\{\NV_m \NV_m^H\}  =  \sigma_m^2 \SIGMA ~~\in \mathds{C}^{L \times L}
\end{align}\normalsize
where
$[\SIGMA]_{i,j} = \SINC\left(\frac{|i-j|}{\vartheta}\right)$ and
$\sigma_m^2 = 2BN_m$.

The quantizer, represented by a function $Q(\cdot)$, converts the discrete samples into binary data by comparing each sample to a known threshold and then measuring the sign of the real and imaginary parts of the resulting difference. These one-bit measurements at the $m$-th NB-IoT node are
\begin{align}
\label{eq-quantizer}
    \ZV_m =
     Q(\YV_m),
\end{align}
where the $l$-th element of $Q(\YV_m)$ is \par\noindent\small
\begin{align}
\label{eq-onebit}
&[Q(\YV_m)]_l \\&= \frac{1}{\sqrt{2}} \SGN(\RE\{[\YV_m]_l - [\GAMMA_m]_l\}) + \frac{\J}{\sqrt{2}}~ \SGN(\IM\{[\YV_m]_l - [\GAMMA_m]_l\}). \nonumber
\end{align}\normalsize
with $\GAMMA_m \in \mathds{C}^{L \times 1}$ are known thresholds levels.

The nodal processing at each NB-IoT receiver entails estimation of the target time-delays, and hence the range, from one-bit samples $\ZV_m$. In the next section, we devise a method for one-bit time-delay estimation.

\section{Time-Delay Estimation with One-Bit Samples}
\label{sec:time-delay estimation}
Several approaches have been proposed in the literature to estimate range (time-delay) of targets from one-bit samples with most formulating this as an optimization problem. For example, the covariance matrix formulation of \cite{ameri2019one} employs cyclic optimization method to extract the range along with other parameters. Other recent works using only one sensor exploit sparsity of the target scenario to estimate unknown parameters by applying techniques such as $\ell_1$-norm minimization \cite{zahabi2020one} and log-relaxation
 \cite{zhu2020target} to solve the resulting optimization. In our passive NB-IoT sensor set-up, the objective function is a variation of weighted least squares (WLS) that we minimize via $\ell_1$-norm regularization to estimate $\tau_m$ using the one-bit quantized observations, i.e., $\ZV_m$.
In conventional passive radars, direct and indirect path signals are recorded in separate reference and surveillance channels, respectively. However, the direct signal may seep into the surveillance channel and mask the relatively weaker indirect signal. In such cases, adaptive filters are employed to first suppress the direct signal in the surveillance channel \cite{garry2017evaluation}. However, our NB-IoT scenario is an opportunistic sensing application where the receivers are not equipped to record separate channels. Moreover, as explained next, the (additive) overlap of direct signal with the target echo is useful because the former is used to estimate the latter in our formulation. Here, we also remark that there are passive radar applications where direct signal suppression is not crucial. For example, this requirement is often relaxed in passive sensing using communications satellites because of the relatively weak power of the direct path satellite signal than, say, commonly used broadcasting signals \cite{daniel2017design}.

\vspace{-2mm}
\subsection{Constrained-Weighted Least Squares Minimization}
\label{sec:time-delay algorithm}
Equation \eqref{eq-samp-sig} can be transformed to the frequency domain by multiplying both sides by an $L \times L$ Discrete Fourier Transform (DFT) matrix $\FM$, whose $(n,k)$-th entry is $e^{\frac{-\mathrm{j}2\pi nk}{L}}$. This yields
\par\noindent\small
\begin{align}
\label{eq-dft}
\FM \YV_m = \widetilde{\alpha}_m \DIAG(\overline{\SV}_{\widetilde{\tau}_m})\AV(\widetilde{\tau}_m) + \alpha_m \DIAG(\overline{\SV}_{\tau_m}) \AV(\tau_m)  + \overline{\NV}_m,
\end{align}\normalsize
where $\overline{\NV}_m = \FM \NV_m$, $[\AV(u)]_l = e^{-\J 2\pi \frac{(l-1)u}{LT_s}}$ for $0 \leq l \leq L-1$ and $\overline{\SV}_{u} = \FM \SV_{u} $ with
\par\noindent\small
\begin{align}
[\SV_u]_l = \left\{\begin{array}{ll} s((l-1)T_s) & 1 \leq l \leq  L - \lfloor \frac{u}{L} \rfloor,\\
0 & {\rm otherwise}.
\end{array}\right.
\end{align}\normalsize
Let us discretize the continuous space of the time delay, i.e., $[0, T)$, into a given set of $N \geq L$ grid points, i.e., $\{\overline{\tau}_{m,1}, \cdots, \overline{\tau}_{m,N}\}$ \cite{yang2018sparse}. This discretization transforms \eqref{eq-dft} into the following sparse model 
\par\noindent\small
\begin{align}
\label{eq-dft-re2}
\FM \YV_m =  [\overline{\SM} \odot \AM(\overline{\TAU}_m)] \overline{\boldsymbol{\alpha}}_m +  \overline{\NV}_m
\end{align}\normalsize
where $\AM(\overline{\TAU}_m) = \begin{bmatrix} \AV(\overline{\tau}_{m,1}) & \cdots &  \AV(\overline{\tau}_{m,N}) \end{bmatrix} \in \mathds{C}^{L \times N}$, $\overline{\SM}  = \begin{bmatrix} \overline{\SV}_{\overline{\tau}_{m,1}}  & \cdots &  \overline{\SV}_{\overline{\tau}_{mN1}} \end{bmatrix} \in \mathds{C}^{L \times N}$ and $\overline{\boldsymbol{\alpha}}_m = \begin{bmatrix} \overline{\alpha}_{m,1} & \cdots & \overline{\alpha}_{m,N} \end{bmatrix} \in \mathds{C}^{N \times 1}$ is a sparse vector with
\par\noindent\small
\begin{align}
\label{alpha-sparse}
  [\overline{\boldsymbol{\alpha}}_m]_k = \left\{\begin{array}{ll} \alpha_m, & {\rm if}~ \overline{\tau}_{m,k} = \tau_m,\\
  \widetilde{\alpha}_m, & {\rm if}~ \overline{\tau}_{m,k} = \widetilde{\tau}_m,\\
0, & {\rm otherwise}.
\end{array}\right.
\end{align}\normalsize
The waveform $\SV$ is known at NB-IoT receiver. Hence, the problem is to find $\YV_m$ and a sparse vector $\overline{\boldsymbol{\alpha}}_m$ which are consistent with the model in \eqref{alpha-sparse} as well as one-bit measurments $\ZV_m$. In consequence, the time-delay estimation problem can be formulated as follows \cite{zahabi2020one}
\par\noindent\small
\begin{align}
\label{eq-optimization}
\hspace{-4mm}
\begin{array}{ll}
\underset{\YV_m, \overline{\boldsymbol{\alpha}}_m}{\textrm{minimize}} & \|\overline{\boldsymbol{\alpha}}_m\|_1 + \rho \| \WM \left[\FM \YV_m -  [\overline{\SM} \odot \AM(\overline{\TAU}_m)] \overline{\boldsymbol{\alpha}}_m \right] \|^2_2 \\
\SUBJTO & \RE\{\ZV_m\} \odot \RE\{\YV_m - \GAMMA_m\} \succeq \ZEROVV, \\
& \IM\{\ZV_m\} \odot \IM\{\YV_m - \GAMMA_m\} \succeq \ZEROVV.
\end{array}    
\end{align}\normalsize
where $\rho$ is a regularization parameter and $\WM = \SIGMA^{-\frac{1}{2}} \FM^H$ is a weighting matrix. The first term in the objective of \eqref{eq-optimization} promotes sparsity in $\overline{\boldsymbol{\alpha}}_m$ while the second term is a
WLS criterion that penalizes the model mismatch in \eqref{eq-dft-re2} considering the fact that the additive noise in \eqref{eq-dft-re2} follows a circular-symmetric complex Gaussian distribution with the covariance matrix $\sigma_m \FM \SIGMA \FM^H$. Further, linear constraints arise because one-bit quantized and discrete samples must share the same sign. Introducing a slack variable $\XV_m = \SIGMA^{-\frac{1}{2}} \FM^H \left[\FM \YV_m -  [\overline{\SM} \odot \AM(\overline{\TAU}_m)] \overline{\boldsymbol{\alpha}}_m \right]$, \eqref{eq-optimization} becomes
\par\noindent\footnotesize
\begin{align}\label{eq-op-re-con}
\hspace{-2mm}
\begin{array}{ll}
\underset{\XV_m, \overline{\boldsymbol{\alpha}}_m}{\textrm{minimize}} & \|\overline{\boldsymbol{\alpha}}_m\|_1 + \rho\| \XV_m \|^2_2 \\
\SUBJTO & \RE\{\ZV_m\} \!\odot\! \RE\{ \FM^H [\overline{\SM} \odot \AM(\overline{\TAU}_m)] \overline{\boldsymbol{\alpha}}_m \!+\!    \SIGMA^{\frac{1}{2}} \XV_m \!-\! \GAMMA_m\} \succeq \ZEROVV, \\
& \IM\{\ZV_m\} \!\odot\! \IM\{ \FM^H [\overline{\SM} \odot \AM(\overline{\TAU}_m)] \overline{\boldsymbol{\alpha}}_m \!+\! \SIGMA^{\frac{1}{2}} \XV_m \!-\! \GAMMA_m\} \succeq \ZEROVV.
\end{array}
\end{align}\normalsize
The above problem comprises minimization of a convex objective function with linear constraints and can be solved efficiently \cite{boyd2004convex}.

The solution of \eqref{eq-op-re-con} yields estimate of $\overline{\boldsymbol{\alpha}}_m$ which has two non-zero elements at indices $k_1$ and $k_2$. From this, we find $\widehat{\TAU}_m = \left[\frac{(k_1-1)T}{N},\frac{(k_2-1)T)}{N}\right]^T$. The estimated unknown time delay corresponding to the indirect path is then \par\noindent\small
\begin{align}
\widehat{\tau}_m = \max \{[\widehat{\TAU}_m]_1, [\widehat{\TAU}_m]_2\},
\end{align}\normalsize
\vspace{-5mm}
\begin{lem}
\label{lem-1-2}
$\widehat{\tau}_m$ is a consistent estimate of $\tau_m$.
\end{lem}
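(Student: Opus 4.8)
The plan is to read $\widehat{\tau}_m$ as an extremum estimator and prove consistency in the regime where the number of one-bit samples grows without bound, i.e., as the oversampling factor $\vartheta \to \infty$ so that $L = 2\vartheta BT \to \infty$. Since $\widehat{\tau}_m$ is obtained from the support of the minimizer $\widehat{\overline{\boldsymbol{\alpha}}}_m$ of \eqref{eq-op-re-con} (specifically the larger of the two recovered grid points), it suffices to show that $\widehat{\overline{\boldsymbol{\alpha}}}_m$ converges in probability to the true two-sparse vector $\overline{\boldsymbol{\alpha}}_m$ of \eqref{alpha-sparse}, and that its support converges to the true pair of delay indices. First I would make this reduction explicit and fix the asymptotic regime.

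The crucial structural observation is that the noise-free signal $\widetilde{\alpha}_m \SV(\widetilde{\tau}_m) + \alpha_m \SV(\tau_m)$ lies on a fixed finite-dimensional manifold parametrized by $(\tau_m, \widetilde{\tau}_m, \alpha_m, \widetilde{\alpha}_m)$, whereas each one-bit sample imposes a sign constraint on this low-dimensional parameter through \eqref{eq-onebit}. Because $\NV_m$ is complex Gaussian with covariance $\sigma_m^2 \SIGMA$, the probability that a given component of $\ZV_m$ takes the value $+1/\sqrt{2}$ is a Gaussian-CDF transform of the gap between the noise-free signal value and the threshold $[\GAMMA_m]_l$. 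I would therefore show that, as $L \to \infty$, the number of (noisy) sign constraints on the same finite set of parameters diverges, so that the aggregate Fisher information about $\tau_m$ grows without bound; the whitening $\WM = \SIGMA^{-1/2}\FM^H$ is what converts the correlated oversampled noise into genuinely informative constraints. This is the one-bit analog of the exponentially decaying distortion-rate recovery cited in the introduction.

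Next I would transfer this information growth to the optimization in \eqref{eq-op-re-con} by a standard M-estimation argument. As $L \to \infty$ the sign constraints accumulate and the feasible polytope contracts around the true signal, while the WLS term penalizes model mismatch and the $\ell_1$ term enforces the correct sparsity. I would show that the appropriately normalized objective converges (uniformly over a compact parameter set) to a population objective whose unique, well-separated minimizer is the true $\overline{\boldsymbol{\alpha}}_m$, and then invoke the extremum-estimator consistency principle---uniform convergence plus unique identification implies convergence of minimizers---to conclude $\widehat{\overline{\boldsymbol{\alpha}}}_m \to \overline{\boldsymbol{\alpha}}_m$ in probability. Since the grid is refined with $N \geq L \to \infty$, the true delays lie asymptotically on grid points, so the recovered indices $k_1, k_2$ converge to the true delay indices and hence $\widehat{\tau}_m = \max\{[\widehat{\TAU}_m]_1, [\widehat{\TAU}_m]_2\} \to \tau_m$ in probability.

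The hard part will be the identification step: establishing that the true parameter is the unique minimizer of the limiting objective, with a separation that survives the passage to the limit. This requires nondegeneracy of the two-component model ($\tau_m \neq \widetilde{\tau}_m$ with both $\alpha_m$ and $\widetilde{\alpha}_m$ nonzero) together with thresholds $\GAMMA_m$ rich enough that the accumulating sign constraints pin down signal magnitudes and not merely signs. Equally delicate is controlling the $\ell_1$-regularized WLS so that the estimated support stabilizes to exactly the two true indices under noise; the joint behavior of the contracting feasible set and the regularized objective---rather than pointwise signal recovery---is where the main technical effort will concentrate.
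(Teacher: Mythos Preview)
Your proposal is viable in spirit but takes a substantially different route from the paper, and misses the simplification that makes the paper's argument short.

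The paper's proof is a two-step decoupling. First, it shows that as $L\to\infty$ the sign constraints in \eqref{eq-optimization} alone pin down $\YV_m$ uniquely: for any alternative $\YV_m^\circ\neq\YV_m$, the probability that at least one threshold $[\GAMMA_m]_l$ falls strictly between $[\YV_m]_l$ and $[\YV_m^\circ]_l$ (in either real or imaginary part) tends to $1$. The mechanism here is the \emph{randomness of the thresholds} $\GAMMA_m$, drawn uniformly on $[-A_{\max},A_{\max}]$; the calculation is a direct product bound $\prod_l(1-\cdots)\le e^{-\sum_l(\cdots)}\to 0$, borrowed from \cite{gianelli2016one}. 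Second, once the feasible set has collapsed to the single point $\YV_m$, the constrained problem degenerates to an unconstrained LASSO in $\overline{\boldsymbol{\alpha}}_m$, and the paper simply invokes known LASSO consistency \cite{chatterjee2011strong}.

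Your plan instead treats the full constrained, regularized program as a single M-estimator and argues via Fisher-information growth and uniform convergence of the objective. That can be made to work, but it is heavier machinery for a weaker conclusion, and it places the burden exactly where you flag it: joint identification under the $\ell_1$ penalty and the contracting polytope. Two points of friction with the paper's setup are worth noting. First, you attribute the information gain to the Gaussian noise (``Gaussian-CDF transform of the gap''), whereas the paper's identification comes from the uniform random thresholds; with fixed thresholds, sign data alone cannot resolve amplitudes and your identification step would fail. Second, the paper sidesteps your ``hard part'' entirely by first eliminating $\YV_m$ from the problem, so no joint analysis of the polytope and the regularized objective is needed. If you want to align with the paper, restructure your argument as: (i) random thresholds $\Rightarrow$ asymptotic unique feasibility of $\YV_m$; (ii) cite LASSO consistency for the residual sparse regression.
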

\begin{proof}
See Appendix \ref{App-A-A}.
\end{proof}
Hence, the a consistent estimate of the range of the target is given by
$\widehat{r}_m=c \widehat{\tau}_m$.
%
%
\vspace{-2mm}
\subsection{Improved Performance with Oversampling
}
\label{sec:oversampling}
It is possible to improve the recovery performance if the one-bit ADCs sample at a rate higher than the Nyquist. Note that the samples are still quantized to only single bits. In this section, we analyze the effect of oversampling.

In case of oversampling, let replace the CDC module in Fig.~\ref{fig-2} with an equivalent system (Fig.~\ref{fig-3}) composed of a CDC that samples $y_m(t)$ at the Nyquist rate followed by an $\vartheta$-fold upsampling. A low-pass filter with frequency response
\par\noindent\small
\begin{align}
\widetilde{H}(e^{\J \Omega})= \left\{\begin{array}{ll} \vartheta, & |\Omega| \leq \frac{\pi}{\vartheta}, \\
0, & {\rm Otherwise},
\end{array}\right.
\end{align}\normalsize
outputs the oversampled data $\mathbf{y}_m$.
The oversampled $\YV_m$ and Nyquist-sampled $\widetilde{\YV}_m$ (see Fig.~\ref{fig-3}) are related as \cite{oppenheim2009discrete}
\begin{figure}[t]
\centering
\begin{tikzpicture}[scale=0.4]
\draw[-,dashed] (-6,-3.2) -- (-6,4);
\draw[-,dashed] (9,-3.2) -- (9,4);
\draw[-,dashed] (-6,-3.2) -- (9,-3.2);
\draw[-,dashed] (-6,4) -- (9,4);
\draw[->,semithick] (-9,2) -- (-5,2);
\draw [semithick] (-5,1) rectangle (-2,3);
\draw[->,semithick] (-3.5,-1) -- (-3.5,1);
\draw[->,semithick] (-2,2) -- (1,2);
\draw [semithick] (1,1) rectangle (4,3);
\draw [->,thick] (2.2,1.6) -- (2.2,2.4);
\draw[->,semithick] (4,2) -- (5,2);
\draw [semithick] (5,1) rectangle (8,3);
\draw[->,semithick] (8,2) -- (12,2);
\node[] at (-7.5,3){$y_m(t)$};
\node[] at (-0.5,3){$\widetilde{{\YV}}_m$};
\node[] at (10.5,3){$\YV_m$};
\node[] at (-3.5,2){CDC}; 
\node[] at (2.8,2){$\vartheta$};
\node[] at (6.5,2){$\widetilde{H}(e^{\J \Omega})$}; 
\node[] at (-3.5,-2){$T_s^{'} = \frac{1}{2 B}$}; 
\end{tikzpicture}
\caption{An equivalent representation of Fig.~\ref{fig-3} to show both oversampled $\YV_m$ and Nyquist-sampled $\widetilde{\YV}_m$.}
\label{fig-3}
\end{figure}
\par\noindent\footnotesize
\begin{align}
\label{eq-linear-re}
&[\YV_m]_l = \sum_{p=1}^{L/\vartheta} [\widetilde{\YV}_m]_p \SINC \left( \frac{l-1}{ \vartheta} - p+1 \right)\\
&=\begin{dcases}
           [\widetilde{\YV}_m]_p, \;\;\;\;\;\;\;\;\;\;\;\;\;\;\;\;\;\;\;\;\;\;\;\;\;\;\;\;\;\;\;\;\;\;\;\;\; \text{if}~ l = (p-1) \vartheta+1, \; 1 \leq p \leq L/\vartheta , \\
            \sum\limits_{p=1}^{L/\vartheta} [\widetilde{\YV}_m]_p \SINC \left( \frac{l-1}{\vartheta} \!-\! p \!+\! 1 \right), \; \text{otherwise}. 
          \end{dcases} \nonumber
\end{align}\normalsize
Indeed, \eqref{eq-linear-re} implies that $\frac{L}{\vartheta}$ elements of $\YV_m$ are exactly equal to those of $\widetilde{\YV}_m$; and the other elements of $\YV_m$ are obtained from linear combinations of the elements of $\widetilde{\YV}_m$. Let $[\overline{\YV}_m]_l = [\YV_m]_l$ for $l \neq (p-1) \vartheta+1$ and $1 \leq p \leq L/\vartheta$ and $\mathbfcal{I}(.|\THETA)$ denote the Fisher Information Matrix (FIM) with
respect to the parameter vector $\THETA$. The linear dependence of $\overline{\YV}_m$ and $\widetilde{\YV}_m$ implies that $\mathbfcal{I}(\overline{\YV}_m|\widetilde{\YV}_m,\TAU_m, \ALPHA_m)= \ZEROVV$. Hence, it follows from the chain
rule of FIM \cite{Zamir1998} that
\begin{align}
\mathbfcal{I}(\YV_m|\TAU_m, \ALPHA_m) =  \mathbfcal{I}(\widetilde{\YV}_m|\TAU_m, \ALPHA_m).
\end{align}
This means that oversampling has no impact on the accuracy of the time-delay estimation using full-precision data in our model.

Now let us consider the effect of oversampling on the accuracy of the time-delay estimation using one-bit data. Substituting \eqref{eq-linear-re} into \eqref{eq-quantizer} yields
\par\noindent\footnotesize
\begin{align}
\label{eq-nonlinear}
&[\ZV_m]_l = Q([\YV_m]_l)\\
&\hspace{-2mm}=\left\{\begin{array}{ll}
           [\widetilde{\ZV}_m]_p, \;\;\;\;\;\;\;\;\;\;\;\;\;\;\;\;\;\;\;\;\;\;\;\;\;\;\;\;\;\;\;\;\;\;\;\;\;\;\;\;\;\;\;\;\;\text{if}~ l \!=\! (p-1) \vartheta+1, \; 1 \!\leq\! p \!\leq\! L , \\
            Q\left(\sum\limits_{p=1}^{L} [\widetilde{\YV}_m]_p \SINC \left( \frac{l-1}{ \vartheta} - p+1 \right)\right),\; \text{otherwise}, 
          \end{array}
\right.      \nonumber
\end{align}\normalsize
where $\widetilde{\ZV}_m = Q(\widetilde{\YV}_m)$ contains the one-bit data at the Nyquist rate. From \eqref{eq-nonlinear}, we deduce that whereas $\frac{L}{\vartheta}$ elements of $\ZV_m$ are exactly equal to those of $\widetilde{\ZV}_m$, the  remaining elements of $\ZV_m$, denoted by  $\overline{\ZV}_m \in \mathds{C}^{(1-\frac{1}{\vartheta})L \times 1}$, can not be constructed from linear combinations of the elements of $\widetilde{\ZV}_m$ like the full-precision case. In other words, \eqref{eq-nonlinear} indicates that while $\widetilde{\ZV}_m$ provides information about only the signs of $\widetilde{\YV}_m$, $\overline{\ZV}_m$ provides additional information on the signs of the linear combinations of $\widetilde{\YV}_m$. Therefore, in general, $\mathbfcal{I}(\overline{\ZV}_m|\widetilde{\ZV}_m,\TAU_m, \ALPHA_m) \succeq \ZEROVV$. From the chain
rule of FIM \cite{Zamir1998}, we have
\par\noindent\small
\begin{align}
\label{eq-FIM}
\mathbfcal{I}(\ZV_m|\TAU_m, \ALPHA_m) \!=\!  \mathbfcal{I}(\widetilde{\ZV}_m|\TAU_m, \ALPHA_m) \!+\! \mathbfcal{I}(\overline{\ZV}_m|\widetilde{\ZV}_m,\TAU_m, \ALPHA_m).
\end{align}
Considering \eqref{eq-FIM} and $\mathbfcal{I}(\overline{\ZV}_m|\widetilde{\ZV}_m,\TAU_m, \ALPHA_m) \succeq \ZEROVV$, we observe
\begin{align}
\label{eq-FIM2}
\mathbfcal{I}(\ZV_m|\TAU_m, \ALPHA_m) \succeq  \mathbfcal{I}(\widetilde{\ZV}_m|\TAU_m, \ALPHA_m)
  \end{align}
This implies that oversampling could enhance the parameter estimation performance when one-bit quantized data is used.
 \section{Target Localization with One-Bit Samples}
\label{sec:localization}
In order to comply with bandwidth and power limitations, each of the $M$ sensors converts its nodal range measurements into a binary sample $w_m$ by comparing it to a positive threshold $\lambda_m > 0$, i.e.,
\begin{align}
\label{Eq-3}
w_m=\SGN(r_m - \lambda_m).
\end{align}
All nodes forward this binary range and the corresponding thresholds to the FC which localizes the target using the binary range measurements from all nodes.
We first present a framework for target localization with full precision (or infinite-bit) range measurements and follow it with our methods for one-bit data.

\subsection{Localization with Full-Precision Range Estimates}
\label{ssec:Loc_Inf}
Recall the expressions of $d_m$ and $r_m$ in \eqref{eq-2} and \eqref{eq-3}, respectively. Without loss of generality, consider the first ($m=1$) sensor as the reference sensor. The difference between the true range with respect to reference sensor and any of the remaining $m$-th $(m > 1)$ sensor is
\begin{align}
\label{eq-4}
r_m-r_1=d_m-d_1,
\end{align}
Rearranging \eqref{eq-4} as
$
r_m -r_1 + d_1=d_m,
$
and squaring both sides produces\par\noindent\small
\begin{align}
    ((r_m -r_1) + d_1)^2\!=\!d_m^2
    \!=\!(\delta^x_m-\delta^x_1)^2 \!+\! (\delta^y_m-\delta^y_1)^2 \!+\! (\delta^z_m-\delta^z_1)^2,
\end{align}\normalsize
where the last equality follows after substituting $d_m$ from \eqref{eq-2}. Simplifying yields
\par\noindent\small
\begin{align}
\label{eq-6}
&(\delta^x-\delta^x_1) (\delta^x_m-\delta^x_1) + (\delta^y-\delta^y_1) (\delta^y_m-\delta^y_1) + (\delta^z-\delta^z_1) (\delta^z_m-\delta^z_1) \nonumber\\
& + ( r_m -r_1) d_1 =\\ & \dfrac{1}{2}\left[(\delta^x_m-\delta^x_1)^2 + (\delta^y_m-\delta^y_1)^2 + (\delta^z_m-\delta^z_1)^2 - (r_m - r_1)^2\right], \nonumber   
\end{align}\normalsize
which are linear in the target coordinates $\begin{bmatrix} \delta^x \!&\! \delta^y \!&\! \delta^z\end{bmatrix}^T$. Denote the unknown parameter vector
\begin{align}
\THETA = \begin{bmatrix} \delta^x-\delta^x_1 \!&\! \delta^y-\delta^y_1 \!&\! \delta^z-\delta^z_1 \!&\! d_1 \end{bmatrix}^T \in \mathds{R}^{4 \times 1}.
\end{align}
Then, collecting all linear equations specified by \eqref{eq-6} for $m = {2, \cdots, M}$, we obtain the following compact matrix form
\begin{align}
\label{eq-7}
 \GM \THETA  = \HV,
\end{align}
where\par\noindent\footnotesize
\begin{align}
\label{eq-8}
\GM\! = \! \begin{bmatrix}
(\delta^x_2-\delta^x_1) & (\delta^y_2-\delta^y_1) & (\delta^z_2-\delta^z_1) & r_2 -r_1 \\
\vdots & \vdots & \vdots & \vdots \\
(\delta^x_M-\delta^x_1) & (\delta^y_M-\delta^y_1) & (\delta^z_M-\delta^z_1) & r_M - r_1
\end{bmatrix} 
\in \mathbb{R}^{(M-1)\times 4},
\end{align}\normalsize
and
\par\noindent\small
\begin{align}
\label{eq-9}
&\hspace{-4mm}\HV=\frac{1}{2}
\begin{bmatrix}
(\delta^x_2-\delta^x_1)^2 + (\delta^y_2-\delta^y_1)^2 + (\delta^z_2-\delta^z_1)^2 - (r_2-r_1)^2 \\
\vdots \\
(\delta^x_m-\delta^x_1)^2 + (\delta^y_M-\delta^y_1)^2 + (\delta^z_M-\delta^z_1)^2 - (r_M-r_1)^2
\end{bmatrix} \nonumber \\
&~~~~~~~~~~~~~~~~~~~~~~~~~~~~~~~~~~~~~~~~~~~~~~~
\in \mathbb{R}^{(M-1) \times 1}.
\end{align}\normalsize

In practice, every true $m$-th sensor range $r_m$ is unknown. As explained in the previous section, we employ constrained WLS to obtain the estimate $\widehat{r}_m$. Assume $\widehat{r}_m= r_m + e_m$, where $e_m$ is the estimation error due to the receiver noise. Then, the equality in \eqref{eq-7} does not hold and the resulting perturbed system of equations takes the form
\begin{align}
    \label{eq-10}
    \EPSILON = \GM \THETA - \HV,
\end{align}
where $\EPSILON$ denotes the perturbation term.
Assuming $\GM$ is full column rank, the least squares (LS) solution of the system of linear equations in \eqref{eq-10} yields
\begin{align}
\label{eq-12}
\widehat{\THETA}=\GM^{\dagger} \HV.   
\end{align}
Then, the target location is obtained as \par\noindent\small
\begin{align}
\label{full-ran-est}
\begin{bmatrix} \delta^x \!&\! \delta^y \!&\! \delta^z \end{bmatrix}^T = \begin{bmatrix} [\widehat{\THETA}]_1 + \delta^x_1 & [\widehat{\THETA}]_2 + \delta^y_1 & [\widehat{\THETA}]_3 + \delta^z_1 \end{bmatrix}^T.
\end{align}\normalsize
\begin{rmk}
\label{rmk-rev}
Contrary to range estimation, WLS is not applicable for estimating $\THETA$ in \eqref{eq-10} because the covariance matrix of perturbation $\EPSILON$ is unknown. This is apparent from the fact that the covariance matrix of the perturbation term is a function of the variances of the range estimation errors, i.e., $e_1, e_2, \cdots, e_M$, as well as the unknown target location. Under such circumstances, the best choice for the weighting matrix is the identity matrix, which reduces WLS to LS.
\end{rmk}

When the FC receives the full-precision nodal range estimates, i.e., $\widehat{r}_m$ for $1 \leq  m \leq M$, the aforementioned LS solution in \eqref{full-ran-est} is quite effective. However, when the nodal range estimates are quantized to one-bit as in \eqref{Eq-3}, the LS approach is no longer applicable at the FC.
\subsection{Optimal Localization with One-Bit Nodal Range Estimates}
\label{ssec:Loc_one:opt}
We first develop an optimal approach for localization with one-bit quantized range measurements from the $M$ nodes denoted by $\WV= \begin{bmatrix} w_1, w_2, \cdots, w_M \end{bmatrix}^T$. We show that this optimal approach achieves the global minimum.

Consider $\overline{\RV}=\begin{bmatrix} r_2 & r_3 & \cdots & r_M\end{bmatrix}^T \in \mathds{R}^{(M-1) \times 1}$ and denote $\ONEVV$ as a $(M-1) \times 1$ vector with all ones as its elements. Define\par\noindent\small
\begin{align}
\VM=\begin{bmatrix}
(\delta^x_2-\delta^x_1) & (\delta^y_2-\delta^y_1) & (\delta^z_2-\delta^z_1)  \\
\vdots & \vdots & \vdots \\
(\delta^x_M-\delta^x_1) & (\delta^y_M-\delta^y_1) & (\delta^z_M-\delta^z_1) 
\end{bmatrix} \in \mathbb{R}^{(M-1)\times 3},
\end{align}\normalsize
and
\par\noindent\small
\begin{align}
\label{eq-15}
\BV=\dfrac{1}{2}\begin{bmatrix}
(\delta^x_2-\delta^x_1)^2 + (\delta^y_2-\delta^y_1)^2 + (\delta^z_2-\delta^z_1)^2\\
\vdots \\
(\delta^x_M-\delta^x_1)^2 + (\delta^y_M-\delta^y_1)^2 + (\delta^z_M-\delta^z_1)^2
\end{bmatrix} \in \mathbb{R}^{(M-1) \times 1}.
\end{align}\normalsize
Both $\VM$ and $\BV$ are known \textit{a priori}. Then,
\par\noindent\small
\begin{align}
\label{eq-13}
\GM &= \begin{bmatrix} \VM & \overline{\RV}-r_1 \ONEVV \end{bmatrix}, \\
\label{eq-14}
\HV &=  \BV - \dfrac{1}{2} (\overline{\RV}-r_1 \ONEVV) \odot (\overline{\RV}-r_1 \ONEVV).
\end{align}\normalsize
We jointly estimate the unknown $\THETA$ and $\RV$ by solving the optimization
\begin{align}
    \label{eq-joint-range-location}
\begin{array}{ll}
\underset{\RV,\THETA}{\MIN} & \|\GM \THETA - \HV \|^2_2 \\
\SUBJTO & \WV \odot (\RV - \LAMBDA) \succeq \ZEROVV, \\
~ & \RV \succeq \ZEROVV,
\end{array}
\end{align}
where $\LAMBDA=[\lambda_1, \lambda_2, \cdots, \lambda_M]^T$. The first linear constraint in \eqref{eq-joint-range-location}, similar to the formulation in Section \ref{sec:time-delay estimation}, arises because the one-bit quantized data and the elements of $\RV - \LAMBDA$ must share the same sign; and the second constraint indicates that range values are non-negative. 
Reformulate the objective function ${\cal L}(\RV , \THETA) \triangleq \|\GM \THETA - \HV \|^2_2$ as
\par\noindent\footnotesize
\begin{align}
\label{eq-15-2}
{\cal L}(\RV , \THETA) &\triangleq
\left\|\begin{bmatrix} \VM & \overline{\RV}-r_1 \ONEVV\end{bmatrix} \THETA -  \BV + \dfrac{1}{2} (\overline{\RV}-r_1 \ONEVV) \odot (\overline{\RV}-r_1 \ONEVV) \right\|^2_2.
\end{align}\normalsize

When $\RV$ is fixed, the LS solution for $\THETA$ is given by \eqref{eq-12}. Substituting \eqref{eq-12} into \eqref{eq-15-2} yields
\par\noindent\small
\begin{align}
    \label{eq-16}
   {\cal L}(\RV) &= {\cal L}(\RV , \widehat{\THETA}) \triangleq \|\GM\GM^{\dagger} \HV - \HV \|^2_2=  \| \Pi^{\perp}_{\GM} \HV  \|^2_2 \\
   &=\left\|\big[\Pi^{\perp}_{\VM}-\Pi_{\Pi^{\perp}_{\VM} (\overline{\RV}-r_1 \ONEVV)}\big] \big[\BV-\dfrac{1}{2} (\overline{\RV}-r_1 \ONEVV) \odot (\overline{\RV}-r_1 \ONEVV) \big] \right\|^2_2, \nonumber
\end{align}\normalsize
where the last equality 
is obtained by substituting \eqref{eq-13}-\eqref{eq-14} and using $\Pi^{\perp}_{\GM}=\Pi^{\perp}_{\VM}-\Pi_{\Pi^{\perp}_{\VM} (\overline{\RV}-r_1 \ONEVV)}$ following the projection decomposition theorem \cite{yanai2011projection}.
Since $\Pi^{\perp}_{\VM}(\overline{\RV}-r_1 \ONEVV) \in {\cal N}(\VM^H)$, it is easily confirmed that $\Pi^{\perp}_{\VM} \Pi_{\Pi^{\perp}_{\VM} (\overline{\RV}-r_1 \ONEVV)} = \Pi_{\Pi^{\perp}_{\VM} (\overline{\RV}-r_1 \ONEVV)}$ simplifying \eqref{eq-16} to 
\par\noindent\small
\begin{align}
    \label{eq-17}
{\cal L}(\RV) = & \left[\BV-\dfrac{1}{2} (\overline{\RV}-r_1 \ONEVV) \odot (\overline{\RV}-r_1 \ONEVV) \right]^T \left[\Pi^{\perp}_{\VM}-\Pi_{\Pi^{\perp}_{\VM} (\overline{\RV}-r_1 \ONEVV)}\right] \nonumber\\
& \times \left[\BV-\dfrac{1}{2} (\overline{\RV}-r_1 \ONEVV) \odot (\overline{\RV}-r_1 \ONEVV) \right].
\end{align}\normalsize
Expanding $\Pi_{\Pi^{\perp}_{\VM} (\overline{\RV}-r_1 \ONEVV)}$ 
yields
\par\noindent\small
\begin{align}
    \label{eq-18}
\Pi_{\Pi^{\perp}_{\VM} (\overline{\RV}-r_1 \ONEVV)} &= \Pi^{\perp}_{\VM} (\overline{\RV}-r_1 \ONEVV){\Pi^{\perp}_{\VM} (\overline{\RV}-r_1 \ONEVV)}^\dagger\nonumber\\
&=\frac{\Pi^{\perp}_{\VM} (\overline{\RV}-r_1 \ONEVV) (\overline{\RV}-r_1 \ONEVV)^T \Pi^{\perp}_{\VM}} {\|\Pi^{\perp}_{\VM} (\overline{\RV}-r_1 \ONEVV)\|^2_2}.
\end{align}\normalsize
Note that the fact that $\GM$ is full column rank guarantees $\|\Pi^{\perp}_{\VM} (\overline{\RV}-r_1 \ONEVV)\|^2_2 \neq 0$. Substituting \eqref{eq-18} in \eqref{eq-17}, the ${\cal L}(\RV)$ takes the rational form  
$\frac{{\cal F}(\RV)}{{\cal J}(\RV)}$
where ${\cal F}(\RV)$, given in \eqref{eq-21} at the top of the next page,
%
\begin{figure*}[!t]
\par\noindent\small
\begin{align}
\label{eq-21}
{\cal F}(\RV) =
&
\|\Pi^{\perp}_{\VM} (\overline{\RV}-r_1 \ONEVV)\|^2_2 \bigg(\|\Pi^{\perp}_{\VM} \BV\|^2_2 + \dfrac{1}{4} \| \Pi^{\perp}_{\VM} \big[ (\overline{\RV}-r_1 \ONEVV) \odot (\overline{\RV}-r_1 \ONEVV) \big]\|^2_2 
- \BV^T  \Pi^{\perp}_{\VM} \big[ (\overline{\RV}-r_1 \ONEVV) \odot (\overline{\RV}-r_1 \ONEVV) \big]\bigg) - \left(\BV^T  \Pi^{\perp}_{\VM} (\overline{\RV}-r_1 \ONEVV)\right)^2 \nonumber\\
& 
- \dfrac{1}{4} \left( \big[ (\overline{\RV}-r_1 \ONEVV) \odot (\overline{\RV}-r_1 \ONEVV) \big]^T \Pi^{\perp}_{\VM} (\overline{\RV}-r_1 \ONEVV) \right)^2
+ \BV^T \Pi^{\perp}_{\VM} (\overline{\RV}-r_1 \ONEVV) (\overline{\RV}-r_1 \ONEVV)^T \Pi^{\perp}_{\VM} \big[ (\overline{\RV}-r_1 \ONEVV) \odot (\overline{\RV}-r_1 \ONEVV) \big],
\end{align}\normalsize
\hrulefill
\end{figure*}
is a polynomial of degree $6$ and
\begin{align}
\label{eq-21-1}
{\cal J}(\RV) =& \|\Pi^{\perp}_{\VM} (\overline{\RV}-r_1 \ONEVV)\|_2^2,
\end{align}
is a polynomial of degree $2$.
Hence, \eqref{eq-joint-range-location} becomes
\begin{align}
    \label{eq:fracopt}
\begin{array}{ll}
\underset{\RV}{\MIN} & \dfrac{{\cal F}(\RV)}{{\cal J}(\RV)}\\
\SUBJTO & \WV \odot (\RV - \LAMBDA) \succeq \ZEROVV, \\
~ & \RV \succeq \ZEROVV.
\end{array}
\end{align}
The optimization problem in \eqref{eq:fracopt} is non-convex. In order to relax this fractional structure, we decouple the numerator and the denominator as stated in the following theorem.
\begin{theo}\label{theorem-1}
The optimization problem in \eqref{eq:fracopt} is equivalent to
\begin{align}
\label{eq:jsdp}
\begin{array}{ll}
\underset{v, \RV}{\textrm{minimize}} & v \\
\SUBJTO &  v {\cal J}(\RV) -{\cal F}(\RV) \geq 0, \\
  & \WV \odot
(\RV - \LAMBDA) \succeq \ZEROVV, \\
  & \RV \succeq \ZEROVV,
\end{array}
\end{align}
where $v$ is a slack variable.
\end{theo}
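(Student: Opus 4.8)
The plan is to show that the fractional program in \eqref{eq:fracopt} and the epigraph-type program in \eqref{eq:jsdp} attain the same optimal value over the same feasible set, and moreover that an optimizer of one yields an optimizer of the other. The feasibility constraints involving $\WV$ and $\RV$ are identical in both problems, so the entire argument reduces to comparing the objectives on the common feasible set; denote it by $\Omega = \{\RV : \WV \odot (\RV - \LAMBDA) \succeq \ZEROVV,\ \RV \succeq \ZEROVV\}$. The key structural fact, already established in the excerpt, is that ${\cal J}(\RV) = \|\Pi^{\perp}_{\VM} (\overline{\RV}-r_1 \ONEVV)\|_2^2 > 0$ whenever $\GM$ is full column rank; since $\GM$ being full column rank is a standing assumption, the denominator is strictly positive on $\Omega$, so the ratio ${\cal F}(\RV)/{\cal J}(\RV)$ is well defined there.

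The core of the equivalence is the elementary observation that for any fixed $\RV \in \Omega$, because ${\cal J}(\RV) > 0$, the inequality $v\,{\cal J}(\RV) - {\cal F}(\RV) \geq 0$ is equivalent to $v \geq {\cal F}(\RV)/{\cal J}(\RV)$. First I would make this precise: for a given feasible $\RV$, the set of $v$ satisfying the first constraint of \eqref{eq:jsdp} is exactly the half-line $[{\cal F}(\RV)/{\cal J}(\RV), \infty)$, whose infimum is the value of the original objective at $\RV$. Minimizing $v$ jointly over $(v, \RV)$ therefore amounts to minimizing $\inf\{v : v \geq {\cal F}(\RV)/{\cal J}(\RV)\} = {\cal F}(\RV)/{\cal J}(\RV)$ over $\RV \in \Omega$, which is precisely \eqref{eq:fracopt}. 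This establishes equality of the two optimal values.

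To finish, I would spell out the correspondence of minimizers in both directions. If $\RV^\star$ solves \eqref{eq:fracopt} with optimal value $p^\star = {\cal F}(\RV^\star)/{\cal J}(\RV^\star)$, then $(v^\star, \RV^\star) = (p^\star, \RV^\star)$ is feasible for \eqref{eq:jsdp} and attains $v = p^\star$, so it is optimal for \eqref{eq:jsdp}. Conversely, if $(v^\star, \RV^\star)$ solves \eqref{eq:jsdp}, then feasibility forces $v^\star \geq {\cal F}(\RV^\star)/{\cal J}(\RV^\star)$, while optimality forbids $v^\star$ from exceeding this ratio (otherwise lowering $v^\star$ to the ratio would remain feasible and strictly decrease the objective); hence $v^\star = {\cal F}(\RV^\star)/{\cal J}(\RV^\star)$ and $\RV^\star$ solves \eqref{eq:fracopt} with the same value. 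This two-sided argument certifies the equivalence asserted in the theorem.

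The only genuine subtlety, and the step I expect to require the most care, is justifying the strict positivity ${\cal J}(\RV) > 0$ uniformly on $\Omega$, since the whole equivalence of the two constraint sets hinges on never dividing by zero; this is exactly where the full-column-rank hypothesis on $\GM$ is indispensable, as noted right after \eqref{eq-21-1}. Beyond that, the remainder is the standard Charnes--Cooper / epigraph reformulation of a fractional objective with a positive denominator, and I would present it concisely rather than belabor the algebra.
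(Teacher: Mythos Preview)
Your proposal is correct and follows essentially the same epigraph argument as the paper: both exploit ${\cal J}(\RV)>0$ on the common feasible set to rewrite the polynomial constraint as $v\ge{\cal F}(\RV)/{\cal J}(\RV)$ and then sandwich the two optimal values. Your treatment is in fact slightly more careful than the paper's in insisting on \emph{strict} positivity of ${\cal J}$ (via the full-column-rank assumption on $\GM$), which is exactly what is needed to pass from $v{\cal J}(\RV)\ge{\cal F}(\RV)$ to the ratio inequality.
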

\begin{IEEEproof}
See Appendix \ref{App-A}.
\end{IEEEproof}
The objective in the optimization problem \eqref{eq:jsdp} is not rational. However, it is still non-convex because of the polynomial constraint $v {\cal J}(\RV) -{\cal F}(\RV) \geq 0$ of degree $6$.  
To reformulate the problem to an equivalent SDP, we employ Lasserre's multivariate polynomial optimization \cite{lasserre2001global}. 
\begin{Def}[Monomial basis of degree $p$]
\label{definition-1}
The vector $\GV_p(\UV)$ is called the monomial basis of degree $p$ if it contains all monomials $u^{\nu_1}_1 u^{\nu_2}_2 \cdots u^{\nu_q}_q$ such that $\sum_{i=1}^q \nu_i \leq p$ with $\nu_i$'s being integers. 
\end{Def}
For example, $\GV_2(u_1, u_2)$ is the monomial basis of degree $2$ if
\begin{align}
 \GV_2([u_1, u_2]^T) = \begin{bmatrix} 1 & u_1 & u_2 & u_1^2 & u_1 u_2 & u_2^2
 \end{bmatrix}^T.
\end{align}
To parametrize the first constraint of \eqref{eq:jsdp}, substituting \eqref{eq-21}-\eqref{eq-21-1} in $v {\cal J}(\RV)-{\cal F}(\RV)$, and expanding the resulting equation, we obtain \eqref{eq-21-exp} given at the top of the next page, 
\begin{figure*}[!t]
\par\noindent\small
\begin{align}
\label{eq-21-exp}
&v{\cal J}(\RV) - {\cal F}(\RV) \!=\! \sum_{m=1}^{M} \psi_{mm} r_m^2 v + ( \kappa_m^2 \!-\! \chi \psi_{mm})r_m^2 + \underset{m \neq n}{\sum_{m=1}^{M}\sum_{n=1}^{M}} \psi_{mn} r_m r_n v + \dfrac{(\psi_{mn}^2 \!-\! \psi_{mm} \psi_{nn})}{4}(r_m^4r_n^2 \!-\! r_m^3r_n^3) \nonumber\\
& + (\kappa_m \psi_{mn} \!-\! \psi_{mm} \psi_{nn} ) (r_m^3 r_n \!-\! r_m^2 r_n^2 ) \!+\! (\kappa_m  \kappa_n \!-\! \chi \psi_{mn} )r_m r_n + \underset{m \neq n \neq k}{\sum_{m=1}^{M}\sum_{n=1}^{M} \sum_{k=1}^{M}} \dfrac{( \psi_{mn}\psi_{mk} - \psi_{mm} \psi_{nk})}{4} (r_m^4r_n r_k-2r_m^3r_n^2r_k + r_m^2r_n^2r_k^2) \nonumber\\ 
&+ (\psi_{mm} \psi_{nk} - \psi_{mn} \kappa_k)r_m^2r_nr_k + \underset{m \neq n \neq k \neq q}{\sum_{m=1}^{M}\sum_{n=1}^{M} \sum_{k=1}^{M} \sum_{q=1}^{M}} \dfrac{( \psi_{mk}\psi_{nq} \!-\! \psi_{mn} \psi_{kq})}{4}r_m^2r_n^2r_kr_q + \underset{m \neq n \neq k}{\sum_{m=2}^{M}\sum_{n=2}^{M} \sum_{k=2}^{M}} (\psi_{1m}\psi_{nk} \!-\! 3\psi_{mn} \psi_{1k}) r_1^3 r_m r_n r_k \nonumber\\
& + (4 \psi_{mn} \psi_{mk} + 3\psi_{mm}\psi_{nk}) r_m^3 r_n r_k r_1 + (\psi_{mn} \kappa_k - 2 \psi_{mm} \psi_{nk})r_m r_n r_k r_1 + 3 \underset{m \neq n \neq k \neq q}{\sum_{m=2}^{M}\sum_{n=2}^{M} \sum_{k=2}^{M} \sum_{q=2}^{M}} \psi_{mn} \psi_{kq} r_m^2 r_n r_k r_q,
\end{align}\normalsize
\hrulefill
\end{figure*}
where
\begin{align}
\psi_{mn} & \!=\! \left\{\begin{array}{ll}
           \sum_{i=1}^{M-1}\sum_{j=1}^{M-1} [\Pi^{\perp}_{\VM}]_{i,j}, & \text{if}~~ m=n=1, \\
            \big[\Pi^{\perp}_{\VM}\big]_{m-1,m-1}, & \text{if}~~ 2 \leq m =n \leq M, \\
            - \sum_{i=1}^{M-1} [\Pi^{\perp}_{\VM}]_{i,m-1}, & \text{if}~~ m = 1, 2 \leq n \leq M \\
            \big[\Pi^{\perp}_{\VM}\big]_{m-1,n-1}, & \text{if}~~ 2 \leq m \neq n \leq M,
          \end{array}
\right.\\
\kappa_{m} & \!=\!\left\{\begin{array}{ll}
           - \sum_{i=1}^{M-1}\sum_{j=1}^{M-1} [\Pi^{\perp}_{\VM}]_{i,j} [\BV]_j, & \text{if}~~ m = 1, \\
            \sum_{i=1}^{M-1} [\Pi^{\perp}_{\VM}]_{i,m-1} [\BV]_j, & \text{if}~~ 2 \leq m \leq M,
          \end{array}
\right.
\end{align}
and $\chi=\|\Pi^{\perp}_{\VM} \BV\|^2_2$. Using Definition~\ref{definition-1}, we parameterize the polynomial in the first constraint of \eqref{eq:jsdp}
as
\begin{align}
v {\cal J}(\RV)-{\cal F}(\RV) = \PHI^T \GV_6([\RV,v]^T)
\end{align}
where $\PHI$ is the vector of the coefficients corresponding to the monomial basis $\GV_6([\RV,v]^T)$, which is readily obtained from \eqref{eq-21-exp}. We state the SDP equivalent of \eqref{eq:jsdp} in the following theorem.
\begin{theo}
\label{theorem-2}
Given the scalars $r_1$, $r_2$, $\cdots$, $r_{\Scale[0.5]{M}}$ and integers $\{\nu_i\}_{i=1}^M$, define ${\cal K}: \mathds{R}^{M+1} \to \mathds{R}$ as ${\cal K}(r_1^{\nu_1} r_2^{\nu_2} \cdots r_{\Scale[0.5]{M}}^{\nu_{\Scale[0.5]{M}}} v^{\nu_{\Scale[0.5]{M\!+\!1}}}) = \mu_{\nu_1 \nu_2 \cdots \nu_{\Scale[0.5]{M\!+\!1}}}$ such that ${\cal K}(1) \!=\! \mu_{0 0 \cdots0} \!=\!  1$. Construct the matrices
\par\noindent\footnotesize
\begin{align}
&\TM_{p}(\MU) \!=\! {\cal K}\big(\GV_p([\RV,v]^T) \GV_p^T([\RV,v]^T)\big), \\
&\TM^m_{p-1}(\MU) \!=\!\\
&\left\{\begin{array}{ll} \hspace{-2mm}
           {\cal K}\big(\GV_{p-1}([\RV,v]^T) \GV_{p-1}^T([\RV,v]^T) w_m(r_m \!-\! \lambda_m) \big), \!&\! \text{if}~ 1 \!\leq\! m \!\leq\! M , \\
            \hspace{-2mm} {\cal K}\big(\GV_{p-1}([\RV,v]^T) \GV_{p-1}^T([\RV,v]^T) r_m\big), \!&\! \text{if}~ M+1 \!\leq\! m \!\leq\! 2M,\\
            \hspace{-2mm} {\cal K}\big(\GV_{p-1}([\RV,v]^T) \GV_{p-1}^T([\RV,v]^T) (v_{\rm max}-v)\big) \!&\! \text{if}~ m \!=\! 2M+1,
          \end{array}
\right.  \nonumber
\end{align}\normalsize
and\par\noindent\small
\begin{align}
&\TM_{p-3}(\MU) \!=\! {\cal K}\big(\GV_{p-3}([\RV,v]^T) \GV_{p-3}^T([\RV,v]^T) \PHI^T \GV_6([\RV,v]^T)\big).
\end{align}\normalsize
Then, there exists an integer $p \geq 3$ for which the optimization problem \eqref{eq:jsdp} is equivalent to 
\begin{align}
 \begin{array}{ll}
 \label{eq-opt-sdp}
\underset{\MU}{\textrm{minimize}} & \MU_{00\cdots01} \\
\SUBJTO &  \TM_{p}(\MU) \succeq \ZEROVV, \\
  & \TM_{p-3}(\MU) \succeq \ZEROVV, \\
  & \TM_{p-1}^m(\MU) \succeq \ZEROVV, ~~ 1 \leq m \leq 2M+1,
\end{array}  
\end{align}
such that the minimizer of \eqref{eq:jsdp} is \par\noindent\small
\begin{align}
[r_1^\star, r_2^\star, \cdots, r_M^\star, v^\star]^T = [\mu_{10\cdots00}^\star, \mu_{01\cdots00}^\star, \cdots, \mu_{00\cdots10}^\star, \mu_{00\cdots01}^\star]^T.
\end{align}
\end{theo}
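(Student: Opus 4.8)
The plan is to recognize \eqref{eq:jsdp} as a constrained multivariate polynomial optimization problem (POP) and apply Lasserre's moment hierarchy \cite{lasserre2001global}. The objective $v$ is linear and the constraints are all polynomial inequalities: the degree-$6$ constraint $v{\cal J}(\RV)-{\cal F}(\RV)\geq 0$, the degree-$1$ constraints $w_m(r_m-\lambda_m)\geq 0$ and $r_m\geq 0$, together with the auxiliary bound $v_{\rm max}-v\geq 0$. The first step is the standard lifting: minimizing $v$ over the feasible set, call it $\cal S$, is equivalent to minimizing $\int v\,\mathrm{d}\mu$ over all probability measures $\mu$ supported on $\cal S$, because a linear functional attains its infimum over the (convex) set of probability measures at an extreme point, namely a Dirac mass $\delta_{(\RV^\star,v^\star)}$ at the minimizer. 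This turns the search over points into a search over the moment sequence $\MU=\{\mu_{\nu_1\cdots\nu_{M+1}}\}$ defined through the Riesz functional ${\cal K}$, with the normalization ${\cal K}(1)=\mu_{0\cdots0}=1$ encoding that $\mu$ is a probability measure and ${\cal K}(v)=\mu_{0\cdots01}=\int v\,\mathrm{d}\mu$ giving the relaxed objective of \eqref{eq-opt-sdp}.

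Next I would impose the necessary conditions for $\MU$ to be the truncated moment sequence of some measure supported on $\cal S$. Positive semidefiniteness of the moment matrix $\TM_p(\MU)={\cal K}(\GV_p\GV_p^T)$ is the classical moment-matrix condition for representability, while positive semidefiniteness of each localizing matrix $\TM_{p-1}^m(\MU)$ (for the degree-$1$ constraints, $1\leq m\leq 2M+1$) and $\TM_{p-3}(\MU)$ (for the degree-$6$ constraint) encodes nonnegativity of the corresponding constraint polynomial on the support of $\mu$. The subscript shift to $p-3$ is fixed by degree bookkeeping: the polynomial $v{\cal J}(\RV)-{\cal F}(\RV)$ has total degree $6$, so $\GV_{p-3}\GV_{p-3}^T$ multiplied by it reaches degree $2(p-3)+6=2p$, matching the span of $\TM_p$; this is exactly why the relaxation order must satisfy $p\geq 3$. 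Since any feasible $(\RV,v)$ of \eqref{eq:jsdp} yields, via its Dirac measure, a feasible $\MU$ for \eqref{eq-opt-sdp} with the same objective value, the SDP optimum is a lower bound on the POP optimum; this is the easy (relaxation) direction.

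The converse---tightness at a finite order and recovery of the minimizer---is the crux. Here I would invoke Putinar's Positivstellensatz, whose hypothesis is that the quadratic module generated by the constraint polynomials be Archimedean, equivalently that $\cal S$ be compact together with an explicit bounding certificate. Compactness is precisely why the artificial constraint $v_{\rm max}-v\geq 0$ was introduced: although $\RV$ need not be bounded above by the range inequalities $r_m\geq 0$ and $w_m(r_m-\lambda_m)\geq 0$ alone, the pairing of $v\leq v_{\rm max}$ with $v{\cal J}(\RV)\geq{\cal F}(\RV)$ confines $\RV$ to a bounded region, since ${\cal F}$ has degree $6$ and ${\cal J}$ degree $2$, so that $\|\RV\|\to\infty$ drives the ratio ${\cal F}(\RV)/{\cal J}(\RV)={\cal L}(\RV)$ beyond any fixed $v_{\rm max}$. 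With $\cal S$ compact the Archimedean condition holds, Putinar's theorem applies, and the moment relaxations converge to the true optimum; Lasserre's finite-convergence result then supplies an integer $p\geq 3$ at which equality is attained, after which the first-order moments $[\mu_{10\cdots0}^\star,\ldots,\mu_{0\cdots01}^\star]$ read off the coordinates of the global minimizer.

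The main obstacle I anticipate is not the relaxation direction, which is routine, but certifying \emph{finite} exactness and validating the extraction step. Concretely, one must verify the flat-extension / rank condition of Curto--Fialkow at the optimal $\MU^\star$, which certifies that the truncated moment data comes from a single atomic measure, so that $[\mu_{10\cdots0}^\star,\ldots,\mu_{0\cdots01}^\star]$ is a genuine feasible point of \eqref{eq:jsdp} and not a spurious convex combination of several feasible points with equal objective. The second delicate point is making the compactness argument rigorous: establishing that ${\cal L}(\RV)$ actually grows without bound along every feasible ray escaping to infinity, so that the $v_{\rm max}$ bound does confine $\RV$ and the Archimedean property genuinely holds, is what prevents the range variables from drifting off and breaks the equivalence if omitted.
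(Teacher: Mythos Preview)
Your proposal is correct and follows essentially the same route as the paper: verify the Archimedean/compactness hypothesis (via the artificial bound $v_{\rm max}-v\geq 0$ together with the growth of ${\cal F}/{\cal J}$) and then invoke Lasserre's hierarchy \cite{lasserre2001global} to obtain an exact SDP reformulation at some finite order $p\geq 3$. The only cosmetic differences are that the paper argues from the SOS/Positivstellensatz side (its Lemma~\ref{lemma-1}) rather than the dual moment/measure side you adopt, and it handles the strict-positivity requirement of Putinar by shifting the objective $v\mapsto v+a$; your additional caution about the flat-extension certificate for extracting the minimizer is in fact more careful than the paper, which simply defers that step to \cite[Theorem~4.2]{lasserre2001global}.
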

\begin{IEEEproof}
See Appendix \ref{App-B}.
\end{IEEEproof}
\begin{rmk}
Note that the number of optimization variables in \eqref{eq-opt-sdp} is equal to $\binom{M+2p+1}{2p}$ which could be very large even for moderate values of the number of sensors $M$ and the relaxation order $p$. Therefore, even though this method is able to attain the global minimum, it could become computationally expensive in the practical scenarios.
\end{rmk}
\subsection{Sub-Optimal Localization with One-Bit Nodal Range Estimates}
\label{ssec:Loc_one:sopt}
It is possible to reduce the computational complexity of the Lasserre's SDP method by trading off the optimality.
We now present such a sub-optimal approach by iteratively solving \eqref{eq-joint-range-location} through alternating minimizations over $\THETA$, $r_1$ and $\overline{\RV}$. Although this method, that we call ANTARES standing for iterative joint r\textit{AN}ge-\textit{TAR}get location {\textit ES}timation, achieves only a local minimum, its computationally efficiency is significantly higher than SDP. 

Denote $\THETA^{(k)}$, $r_1^{(k)}$ and $\overline{\RV}^{(k)}$ to be the values of the parameters $\THETA$, $r_1$ and $\overline{\RV}$ at the $k$-th iteration, respectively. Given $\THETA^{(k)}$ and $r_1^{(k)}$, using \eqref{eq-15-2}, the problem in \eqref{eq-joint-range-location} with respect to $\overline{\RV}$ at the $(k+1)$-th iteration becomes
\par\noindent\small
\begin{align}
    \label{eq-sub-optimal}
    \hspace{-2mm}
\begin{array}{ll}
\underset{\overline{\RV}}{\MIN} & \sum_{m=2}^M \left(\dfrac{(r_m-r_1^{(k)})^2}{2} + [\THETA^{(k)}]_4 (r_m -r_1^{(k)}) + \zeta_m^{(k)}\right)^2\\
\SUBJTO & \begin{array}{ll} w_m (r_m - \lambda_m) \geq 0, & 2 \leq m \leq M,\end{array} \\
 & \begin{array}{ll} r_m \geq 0, & \qquad \qquad \quad 2 \leq m \leq M, \end{array}
\end{array} \hspace{-3mm}
\end{align}
where $\zeta_m^{(k)} \!=\! [\VM \overline{\THETA}^{(k)}]_{m-1} - [\BV]_{m-1}$ with $\overline{\THETA}^{(k)} \!=\! \begin{bmatrix} [\THETA^{(k)}]_1 \!&\! [\THETA^{(k)}]_2 \!&\! [\THETA^{(k)}]_3 \end{bmatrix}^T$. The global minimizer of \eqref{eq-sub-optimal} gives the update of $\overline{\RV}^{(k)}$ as $\overline{\RV}^{(k+1)}$ to be used in the next iteration. Observe this optimization problem is separable in $r_2, r_3, \cdots, r_M$. Hence, we convert it into $M-1$ parallel optimization problems, each of which is 
\begin{align}
    \label{eq-sopt-rm}
    \hspace{-2mm}
\begin{array}{ll}
\underset{r_m}{\MIN} & \frac{1}{4} r_m^4 + \beta_m^{(k)} r_m^3 + \varsigma_m^{(k)}  r_m^2 + \omega_m^{(k)} r_m + \eta_m^{(k)} \\
\SUBJTO & w_m (r_m - \lambda_m) \geq 0, \\
~ & r_m \geq 0,
\end{array}\hspace{-2mm}
\end{align}
where \par\noindent\small
\begin{subequations}
\begin{align}
\hspace{-2mm}\beta_m^{(k)} =& [\THETA^{(k)}]_4 - r_1^{(k)}, \\
\hspace{-2mm}\varsigma_m^{(k)}  =& \frac{3(r_1^{(k)})^2}{2} - 3 [\THETA^{(k)}]_4 r_1^{(k)}+ ([\THETA^{(k)}]_4)^2 + \zeta_m^{(k)},\\
\hspace{-2mm}\omega_m^{(k)} =& -(r_1^{(k)})^3 \!+\! 3 [\THETA^{(k)}]_4 (r_1^{(k)})^2 \!-\! 2 \left(([\THETA^{(k)}]_4)^2 \!+\! \zeta_m^{(k)}\right) r_1^{(k)} \nonumber\\
&\!+\! 2 [\THETA^{(k)}]_4 \zeta_m^{(k)},\\
\hspace{-2mm}\eta_m^{(k)}  =& \frac{(r_1^{(k)})^4}{4} \!-\! [\THETA^{(k)}]_4 (r_1^{(k)})^3 \!+\! \left(([\THETA^{(k)}]_4)^2 \!+\! \zeta_m^{(k)}\right) (r_1^{(k)})^2 \nonumber\\
&\!-\! 2 [\THETA^{(k)}]_4 \zeta_m^{(k)} r_1^{(k)} + (\zeta_m^{(k)})^2.
\end{align}
\end{subequations}
\normalsize
Since the objective and constraints in \eqref{eq-sopt-rm} are differentiable, the global minimizer of \eqref{eq-sopt-rm} belongs to a set of points which satisfy the following Karush-Kuhn-Tucker (KKT) conditions \cite{boyd2004convex}:
\par\noindent\small
\begin{subequations}
\begin{align}
\label{eq-kkt-1}
&r_m^3 + 3\beta_m^{(k)} r_m^2 + 2 \varsigma_m^{(k)}  r_m + \omega_m^{(k)} - \varrho_1 w_m - \varrho_2 = 0, \\
\label{eq-kkt-2}
&w_m (r_m - \lambda_m) \geq 0, \\
\label{eq-kkt-3}
&r_m \geq 0, \\
\label{eq-kkt-4}
&\varrho_1 w_m (r_m - \lambda_m) = 0,\\
\label{eq-kkt-5}
&\varrho_2 r_m = 0,\\
\label{eq-kkt-6}
& \varrho_1 \geq 0, \\
\label{eq-kkt-7}
& \varrho_2 \geq 0.
\end{align}
\end{subequations}
\normalsize
where $\varrho_1$ and $\varrho_2$ are the KKT multipliers.
From \eqref{eq-kkt-2}-\eqref{eq-kkt-7}, there are three possibilities:
\begin{enumerate}[label=(\roman*)]
\item $\varrho_1 > 0$ and $\varrho_2=0$: From \eqref{eq-kkt-5}, under this condition, $r_m$ must be equal to $\lambda_m$. Considering $r_m = \lambda_m$ and $\varrho_2=0$, it follows from \eqref{eq-kkt-1} that
\begin{align}
    \label{eq-pos1-kkt}
    \varrho_1 = w_m (\lambda_m^3 + 3\beta_m^{(k)} \lambda^2_m + 2 \varsigma_m^{(k)}  \lambda_m + \omega_m^{(k)}).
\end{align}
Further, from $\varrho_1>0$, the point $r_m = \lambda_m$ satisfies the KKT conditions if 
\begin{align}
\label{eq-pos1-min-con}
w_m (\lambda_m^3 + 3\beta_m^{(k)} \lambda^2 + 2 \varsigma_m^{(k)}  \lambda_m + \omega_m^{(k)}) > 0.
\end{align}
\item $\varrho_1 = 0$ and $\varrho_2 > 0$: From \eqref{eq-kkt-6}, $r_m$ must be zero under this scenario. Considering $r_m = 0$ and $\varrho_1=0$, it follows from \eqref{eq-kkt-1} and \eqref{eq-kkt-2} that $\varrho_2 = \omega_m^{(k)}$ and $w_m \leq 0$. Hence, when $\varrho_2 >0$, the point $r_m = 0$ satisfies the KKT conditions if 
\begin{align}
\label{eq-pos2-min-con}
\left\{\begin{array}{l}
\omega_m^{(k)} > 0,\\
w_m \leq 0.
\end{array}
\right.
\end{align}
\item $\varrho_1 = 0$ and $\varrho_2 = 0$: Under this scenario, the KKT conditions imply that $r_m$ must be equal to the non-negative real roots of the following cubic equation
\begin{align}
\label{eq-cubic}
   r_m^3 + 3\beta_m^{(k)} r_m^2 + 2 \varsigma_m^{(k)}  r_m + \omega_m^{(k)} = 0,
\end{align}
which satisfy \eqref{eq-kkt-2}.
The roots of \eqref{eq-cubic} are given by
\begin{align}
\label{eq-cubic-root}
\digamma_q = - \frac{1}{3}\left(3\beta_m^{(k)}+\xi^q \Delta_2+\frac{\Delta_0}{\xi^q \Delta_2}\right), \quad q \in \{0,1,2\},
\end{align}
where $\xi = \frac{-1+\J\sqrt{3}}{2}$, $\Delta_2 = \sqrt[3]{\frac{\Delta_1 \pm \sqrt{\Delta_1^2 - 4 \Delta_0^3}}2}$, $\Delta_0 = 9(\beta_m^{(k)})^2 - 6 \varsigma_m^{(k)} $ and $\Delta_1 = 54 (\beta_m^{(k)})^3 - 54\beta_m^{(k)}\varsigma_m^{(k)}  + 27 \omega_m^{(k)}$. Further, it is well-known that amongst the KKT-compatible non-negative real roots of \eqref{eq-cubic}, only those which also satisfy the following second-order sufficient condition 
\begin{align}
\label{eq-suf-con}
3\digamma_q^2 + 6 \beta_m^{(k)} \digamma_q + 2 \varsigma_m^{(k)}  \geq 0,
\end{align}
act as the minimizers of \eqref{eq-sopt-rm} \cite{boyd2004convex}.
As a result, we only consider the non-negative real root of \eqref{eq-cubic} for which
\eqref{eq-kkt-2} and \eqref{eq-suf-con} hold true.
\end{enumerate}
Accordingly, the set of points which are the minimizers of \eqref{eq-sopt-rm} is derived by following (i) to (iii) above. Then, the global minimizer of \eqref{eq-sopt-rm} is the point in this set at which the value of the objective in \eqref{eq-sopt-rm} is the smallest.

Once $\overline{\RV}^{(k+1)}$ is found, the problem \eqref{eq-joint-range-location} with respect to $r_1$ at the $(k+1)$-th iteration is cast as
\begin{align}
    \label{eq-sopt-r1}
\begin{array}{ll}
\underset{r_1}{\MIN} & \frac{1}{4} r_1^4 + \beta_1^{(k)} r_1^3 + \varsigma_1^{(k)} r_1^2 + \omega_1^{(k)} r_1 + \eta_1^{(k)} \\
\SUBJTO & w_1 (r_1 - \lambda_1) \geq 0, \\
~ & r_1 \geq 0,
\end{array}
\end{align}
where
\par\noindent\footnotesize
\begin{subequations}
\begin{align}
\beta_1^{(k)} = & \frac{-1}{M-1}\sum_{m=2}^{M} r_m^{(k+1)} - [\THETA^{(k)}]_4,\\
\varsigma_1^{(k+1)} = & \frac{1}{M-1}\sum_{m=2}^M \frac{3}{2} (r_m^{(k+1)})^2 + 3 [\THETA^{(k)}]_4 r_m^{(k+1)} + \zeta_m^{(k)} +  ([\THETA^{(k)}]_4)^2,\\
\omega_1^{(k)} = & \frac{-1}{M-1} \sum_{m=2}^M \!(r_m^{(k+1)})^3 \!+\! 3 [\THETA^{(k)}]_4 (r_m^{(k+1)})^2 \nonumber\\
& \!+\! 2 \left(([\THETA^{(k)}]_4)^2 \!+\! \zeta_m^{(k)}\right) r_m^{(k+1)} \!+\! 2 [\THETA^{(k)}]_4 \zeta_m^{(k)},\\
\eta_1^{(k)} = & \frac{1}{M-1} \sum_{m=2}^M \frac{(r_m^{(k+1)})^4}{4} \!+\! [\THETA^{(k)}]_4 (r_m^{(k+1)})^3 \nonumber\\
& \!+\! \left(([\THETA^{(k)}]_4)^2  \!+\! \zeta_m^{(k)}\right) (r_m^{(k+1)})^2 \!+\! 2 [\THETA^{(k)}]_4 \zeta_m^{(k)} r_m^{(k+1)} \!+\! (\zeta_m^{(k)})^2.
\end{align}
\end{subequations}
\normalsize
The global minimizer of \eqref{eq-sopt-r1} is attained by following a procedure similar to that of \eqref{eq-sopt-rm}. From $\overline{\RV}^{(k+1)}$ and $r_1^{(k+1)}$, the update of $\THETA^{(k)}$ at $(k+1)$-th iteration is
\begin{align}
\label{eq-theta-update}
\THETA^{(k+1)}=\GM^{{\dagger}^{(k+1)}} \HV^{(k+1)},
\end{align}
where $\GM^{{\dagger}^{(k+1)}}$ and $\HV^{(k+1)}$ are computed by substituting $\overline{\RV}^{(k+1)}$ and $r_1^{(k+1)}$ for $\overline{\RV}$ and $r_1$ in \eqref{eq-8} and \eqref{eq-9}, respectively.

Algorithm \ref{alg-2} summarizes the steps of aforementioned ANTARES for joint estimation of $\THETA$ and $\RV$. Note that each iteration of ANTARES requires solving one-dimensional optimizations, each of which has a closed-form solution. Further, the optimizations with respect to $r_2, r_3, \cdots, r_m$ are solved in parallel at each iteration. Hence, ANTARES is computationally highly efficient compared to \eqref{eq-opt-sdp}.
%
\begin{algorithm}
\caption{Iterative joint r\textit{an}ge-\textit{tar}get location {\textit es}timation (ANTARES)}
\begin{algorithmic}[1]
\qinput 
one-bit samples $\WV$, threshold vector $\LAMBDA$, optimality tolerance parameters $\varepsilon_1$ and $\varepsilon_2$. 
\qoutput Target location estimate $\widehat{\THETA}$, range estimate $\widehat{\RV}$. 
    \State {\bf Initialization:} Set $k=0$, $\THETA^{(0)} \in \mathds{R}^{4 \times 1}$ arbitrarily and $r_1^{(0)} \geq 0$ such that $w_1 (r_1^{(0)} -\lambda_1) > 0$.
    \While{ $\|\THETA^{(k+1)} - \THETA^{(k)}\|_2^2 \geq \varepsilon_1$ and $\|\RV^{(k+1)} - \RV^{(k)}\|_2^2 \geq \varepsilon_2$}
    \If{$2 \leq m \leq M$}
    \State $\mathds{S} \gets \{\varnothing\}$.
    \If{\eqref{eq-pos1-min-con} is fulfilled}
    \State $\mathds{S} \gets \{\lambda_m\} \cup \mathds{S}$.
    \Else
    \State $\mathds{S} \gets \mathds{S}$.
    \EndIf
    \If{\eqref{eq-pos2-min-con} is fulfilled}
    \State $\mathds{S} \gets \{0\} \cup \mathds{S}$.
    \Else
    \State $\mathds{S} \gets \mathds{S}$.
    \EndIf
    \For{$q \gets 0$ to $2$}
    \State $\mathds{D} \gets \{\varnothing\}$.
    \State Find $\digamma_q$ from \eqref{eq-cubic-root}.
    \If{\small $w_m (\digamma_q - \lambda_m) \geq 0$, $\digamma_q \geq 0$, $\IM\{\digamma_q\}=0$ and $3\digamma_q^2 + 6 \beta_m^{(k)} \digamma_q + 2 \varsigma_m^{(k)}  \geq 0$}\normalsize
    \State $\mathds{D} \gets \mathds{D} \cup \digamma_q$.
    \EndIf
    \EndFor
     \State $\mathds{S} \gets \mathds{D} \cup \mathds{S}$.
      \State Find $r_{\rm opt} \in \mathds{S}$ at which the objective of \eqref{eq-sub-optimal} is minimized.
      \State $r_m^{(k+1)} \gets r_{\rm opt}$.
        \EndIf
        \State Follow steps $4$-$17$ to solve \eqref{eq-sopt-r1} for $r_1^{(k+1)}$. 
        \State $\THETA^{(k+1)} \gets \GM^{{\dagger}^{(k+1)}} \HV^{(k+1)}$.
    \EndWhile
    \State $\widehat{\THETA} = \THETA^{(k+1)}$ and $\widehat{\RV} = \RV^{(k+1)}$.
\end{algorithmic}
\label{alg-2}
\end{algorithm}
%
\subsection{CRB for Localization with One-Bit Nodal Range Estimates}
\label{sec:crb}
We employ the CRB as a benchmark for assessing the estimation performance of the proposed optimal and sub-optimal algorithms. This is also useful for demonstrating the performance loss of one-bit quantization over the unquantized processing.

Assume that the estimation error term in $\widehat{r}_m = r_m + e_m$, i.e., $e_m$, follows a zero-mean Gaussian distribution with variance $\upsilon_m^2$, $1 \leq m \leq M$. Then, $\widehat{r}_m$ is distributed as a Gaussian random variable with mean $r_m$ and variance $\upsilon_m^2$, $1 \leq m \leq M$. The $\widehat{r}_1, \widehat{r}_2, \cdots, \widehat{r}_M$ are statistically independent. Hence, the conditional probability density function of $\WV$ given $\QV = [\delta^x, \delta^y, \delta^z, d_0, \upsilon_1, \upsilon_2, \cdots, \upsilon_M]^T \in \mathds{R}^{(M+4) \times 1}$ is
\begin{align}
\label{eq-CRB-1}
f(\WV \mid \QV) = \prod_{m=1}^M \Phi(\frac{w_m(r_m-\lambda_m)}{\upsilon_m}),
\end{align}
where $\Phi(x) = \frac{1}{\sqrt{2}} \int_{\infty}^x e^{-u^2/2} {\rm d}u$. The CRB is the inverse of the Fisher Information Matrix (FIM) $\mathbf{I}(\QV)$, whose $(i,j)$-th element is \cite{kay1993fundamentals}
\begin{align}
\label{eq-CRB-2}
[\mathbf{I}(\QV)]_{i,j} = \EX\left\{ \frac{\partial \log f(\WV \mid \QV)}{\partial [\QV]_i} \frac{\partial \log f(\WV \mid \QV)}{\partial [\QV]_j}\right\}.
\end{align}
From \eqref{eq-CRB-1}, \eqref{eq-2} and \eqref{eq-3}, the partial derivatives of the log-likelihood $\log f(\WV \mid \QV)$ are
\begin{align}
\label{eq-CRB-3}
\frac{\partial \log f(\WV \mid \QV)}{\partial \delta^x} &= \frac{1}{\sqrt{2\pi}}\sum_{m=1}^M \frac{w_m(\delta^x - \delta^x_m) e^{-\frac{(r_m - \lambda_m)^2}{2 \upsilon^2_m}}}{\upsilon_m d_m \Phi(\frac{w_m(r_m-\lambda_m)}{\upsilon_m})},\\
\label{eq-CRB-4}
\frac{\partial \log f(\WV \mid \QV)}{\partial \delta^y} &= \frac{1}{\sqrt{2\pi}}\sum_{m=1}^M \frac{w_m(\delta^y - \delta^y_m) e^{-\frac{(r_m - \lambda_m)^2}{2 \upsilon^2_m}}}{\upsilon_m d_m \Phi(\frac{w_m(r_m-\lambda_m)}{\upsilon_m})},\\
\label{eq-CRB-5}
\frac{\partial \log f(\WV \mid \QV)}{\partial \delta^z} &= \frac{1}{\sqrt{2\pi}}\sum_{m=1}^M \frac{w_m(\delta^z - \delta^z_m) e^{-\frac{(r_m - \lambda_m)^2}{2 \upsilon^2_m}}}{\upsilon_m d_m \Phi(\frac{w_m(r_m-\lambda_m)}{\upsilon_m})},\\
\label{eq-CRB-6}
\frac{\partial \log f(\WV \mid \QV)}{\partial d_0} &= \frac{1}{\sqrt{2\pi}}\sum_{m=1}^M \frac{w_m e^{-\frac{(r_m - \lambda_m)^2}{2 \upsilon^2_m}}}{\upsilon_m d_m \Phi(\frac{w_m(r_m-\lambda_m)}{\upsilon_m})},\\
\label{eq-CRB-7}
\frac{\partial \log f(\WV \mid \QV)}{\partial \upsilon_m} &= - \frac{w_m (r_m - \lambda_m) e^{-\frac{(r_m - \lambda_m)^@}{2 \upsilon^2_m}}}{\upsilon_m^2 \Phi(\frac{w_m(r_m-\lambda_m)}{\upsilon_m})},\;1 \!\leq\! m \!\leq\! M.
\end{align}
Inserting \eqref{eq-CRB-3} to \eqref{eq-CRB-7} into \eqref{eq-CRB-3} and exploiting the statistical independence of $w_1, w_2, \cdots, w_M$, the elements of the FIM are
\begin{align}
\label{eq-CRB-8}
&[\mathbf{I}(\QV)]_{1,1}\!=\! \sum_{m=1}^M \frac{(\delta^x_m- \delta^x)^2}{2\pi \upsilon_m^2 d_m^2}\bigg[\frac{e^{-\frac{(r_m - \lambda_m)^2}{\upsilon^2_m}}}{\Phi(\frac{r_m-\lambda_m}{\upsilon_m})}\!+\!\frac{e^{-\frac{(r_m - \lambda_m)^2}{\upsilon^2_m}}}{\Phi(\frac{-r_m+\lambda_m}{\upsilon_m})}\bigg],\\
\label{eq-CRB-9}
&[\mathbf{I}(\QV)]_{2,2} \!=\! \sum_{m=1}^M \frac{(\delta^y_m- \delta^y)^2}{2\pi \upsilon_m^2 d_m^2}\bigg[\frac{e^{-\frac{(r_m - \lambda_m)^2}{\upsilon^2_m}}}{\Phi(\frac{r_m-\lambda_m}{\upsilon_m})}\!+\!\frac{e^{-\frac{(r_m - \lambda_m)^2}{\upsilon^2_m}}}{\Phi(\frac{-r_m+\lambda_m}{\upsilon_m})}\bigg],\\
\label{eq-CRB-10}
&[\mathbf{I}(\QV)]_{3,3} \!=\! \sum_{m=1}^M \frac{(\delta^z_m- \delta^z)^2}{2\pi \upsilon_m^2 d_m^2}\bigg[\frac{e^{-\frac{(r_m - \lambda_m)^2}{\upsilon^2_m}}}{\Phi(\frac{r_m-\lambda_m}{\upsilon_m})}\!+\!\frac{e^{-\frac{(r_m - \lambda_m)^2}{\upsilon^2_m}}}{\Phi(\frac{-r_m+\lambda_m}{\upsilon_m})}\bigg],\\
\label{eq-CRB-11}
&[\mathbf{I}(\QV)]_{1,2} \!=\! \sum_{m=1}^M \frac{(\delta^x_m- \delta^x)(\delta^y_m- \delta^y)e^{-\frac{(r_m - \lambda_m)^2}{\upsilon^2_m}}}{2\pi \upsilon_m^2 d_m^2}\nonumber\\  &\;\;\;\;\;\;\;\;\;\;\;\;\times\bigg[\frac{1}{\Phi(\frac{r_m-\lambda_m}{\upsilon_m})}\!+\!\frac{1}{\Phi(\frac{-r_m+\lambda_m}{\upsilon_m})}\bigg],\\
\label{eq-CRB-12}
&[\mathbf{I}(\QV)]_{1,3} \!=\! \sum_{m=1}^M \frac{(\delta^x_m- \delta^x)(\delta^z_m- \delta^z)e^{-\frac{(r_m - \lambda_m)^2}{\upsilon^2_m}}}{2\pi \upsilon_m^2 d_m^2}\nonumber\\  &\;\;\;\;\;\;\;\;\;\;\;\;\times\bigg[\frac{1}{\Phi(\frac{r_m-\lambda_m}{\upsilon_m})}\!+\!\frac{1}{\Phi(\frac{-r_m+\lambda_m}{\upsilon_m})}\bigg], \\
&[\mathbf{I}(\QV)]_{2,3} \!=\! \sum_{m=1}^M \frac{(\delta^y_m- \delta^y)(\delta^z_m- \delta^z)e^{-\frac{(r_m - \lambda_m)^2}{\upsilon^2_m}}}{2\pi \upsilon_m^2 d_m^2}\nonumber\\  &\;\;\;\;\;\;\;\;\;\;\;\;\times\bigg[\frac{1}{\Phi(\frac{r_m-\lambda_m}{\upsilon_m})}\!+\!\frac{1}{\Phi(\frac{-r_m+\lambda_m}{\upsilon_m})}\bigg],\\
&[\mathbf{I}(\QV)]_{4,4} \!=\! \sum_{m=1}^M \frac{e^{-\frac{(r_m - \lambda_m)^2}{\upsilon^2_m}}}{2\pi \upsilon_m^2 d_m^2}\bigg[\frac{1}{\Phi(\frac{r_m-\lambda_m}{\upsilon_m})}\!+\!\frac{1}{\Phi(\frac{-r_m+\lambda_m}{\upsilon_m})}\bigg],
\end{align}
\begin{align}
&[\mathbf{I}(\QV)]_{1,4} \!=\! \sum_{m=1}^M \frac{(\delta^x- \delta^x_m)}{2\pi \upsilon_m^2 d_m^2}\bigg[\frac{e^{-\frac{(r_m - \lambda_m)^2}{\upsilon^2_m}}}{\Phi(\frac{r_m-\lambda_m}{\upsilon_m})}\!+\!\frac{e^{-\frac{(r_m - \lambda_m)^2}{\upsilon^2_m}}}{\Phi(\frac{-r_m+\lambda_m}{\upsilon_m})}\bigg],\\
&[\mathbf{I}(\QV)]_{2,4} \!=\! \sum_{m=1}^M \frac{(\delta^y- \delta^y_m)}{2\pi \upsilon_m^2 d_m^2}\bigg[\frac{e^{-\frac{(r_m - \lambda_m)^2}{\upsilon^2_m}}}{\Phi(\frac{r_m-\lambda_m}{\upsilon_m})}\!+\!\frac{e^{-\frac{(r_m - \lambda_m)^2}{\upsilon^2_m}}}{\Phi(\frac{-r_m+\lambda_m}{\upsilon_m})}\bigg],\\
&[\mathbf{I}(\QV)]_{3,4} \!=\! \sum_{m=1}^M \frac{(\delta^z- \delta^z_m)}{2\pi \upsilon_m^2 d_m^2}\bigg[\frac{e^{-\frac{(r_m - \lambda_m)^2}{\upsilon^2_m}}}{\Phi(\frac{r_m-\lambda_m}{\upsilon_m})}\!+\!\frac{e^{-\frac{(r_m - \lambda_m)^2}{\upsilon^2_m}}}{\Phi(\frac{-r_m+\lambda_m}{\upsilon_m})}\bigg], \\
&[\mathbf{I}(\QV)]_{m+4,m+4} \!=\! \frac{(r_m- \lambda_m)^2}{2\pi \upsilon_m^4}\bigg[\frac{e^{-\frac{(r_m - \lambda_m)^2}{\upsilon^2_m}}}{\Phi(\frac{r_m-\lambda_m}{\upsilon_m})}\nonumber\\
&\;\;\;\;\;\;\;\;\;\;\;\;\;\;\;\;\;\;\;\;\;\;\;\; \!+\!\frac{e^{-\frac{(r_m - \lambda_m)^2}{\upsilon^2_m}}}{\Phi(\frac{-r_m+\lambda_m}{\upsilon_m})}\bigg], \;1 \!\leq\! m \!\leq\! M,\\
&[\mathbf{I}(\QV)]_{m+4,m'+4} = 0, \;1 \!\leq\! m \neq m' \!\leq\! M,\\\
&[\mathbf{I}(\QV)]_{1,m+4} \!=\! \frac{(\delta^x_m-\delta^x)(r_m- \lambda_m)}{2\pi \upsilon_m^4}\bigg[\frac{e^{-\frac{(r_m - \lambda_m)^2}{\upsilon^2_m}}}{\Phi(\frac{r_m-\lambda_m}{\upsilon_m})}\nonumber\\
&\;\;\;\;\;\;\;\;\;\;\;\;\;\;\;\;\;\;\;\; \!+\!\frac{e^{-\frac{(r_m - \lambda_m)^2}{\upsilon^2_m}}}{\Phi(\frac{-r_m+\lambda_m}{\upsilon_m})}\bigg], \;1 \!\leq\! m \!\leq\! M, \\
&[\mathbf{I}(\QV)]_{2,m+4} \!=\! \frac{(\delta^y_m-\delta^y)(r_m- \lambda_m)}{2\pi \upsilon_m^4}\bigg[\frac{e^{-\frac{(r_m - \lambda_m)^2}{\upsilon^2_m}}}{\Phi(\frac{r_m-\lambda_m}{\upsilon_m})}\nonumber\\
&\;\;\;\;\;\;\;\;\;\;\;\;\;\;\;\;\;\;\;\; \!+\!\frac{e^{-\frac{(r_m - \lambda_m)^2}{\upsilon^2_m}}}{\Phi(\frac{-r_m+\lambda_m}{\upsilon_m})}\bigg], \;1 \!\leq\! m \!\leq\! M, \\
&[\mathbf{I}(\QV)]_{3,m+4} \!=\! \frac{(\delta^z_m-\delta^z)(r_m- \lambda_m)}{2\pi \upsilon_m^4}\bigg[\frac{e^{-\frac{(r_m - \lambda_m)^2}{\upsilon^2_m}}}{\Phi(\frac{r_m-\lambda_m}{\upsilon_m})}\nonumber\\
&\;\;\;\;\;\;\;\;\;\;\;\;\;\;\;\;\;\;\;\; \!+\!\frac{e^{-\frac{(r_m - \lambda_m)^2}{\upsilon^2_m}}}{\Phi(\frac{-r_m+\lambda_m}{\upsilon_m})}\bigg], \;1 \!\leq\! m \!\leq\! M, \\
&[\mathbf{I}(\QV)]_{4,m+4} \!=\! \frac{(r_m- \lambda_m)}{2\pi \upsilon_m^4}\bigg[\frac{e^{-\frac{(r_m - \lambda_m)^2}{\upsilon^2_m}}}{\Phi(\frac{r_m-\lambda_m}{\upsilon_m})}\nonumber\\
&\;\;\;\;\;\;\;\;\;\;\;\;\;\;\;\;\;\;\;\; \!+\!\frac{e^{-\frac{(r_m - \lambda_m)^2}{\upsilon^2_m}}}{\Phi(\frac{-r_m+\lambda_m}{\upsilon_m})}\bigg], \;1 \!\leq\! m \!\leq\! M.
\end{align}
\section{Numerical Experiments}
\label{sec:numexp}
We investigated the performance of our proposed method through numerical simulations. We also compared the performance of one-bit processing with full precision measurements. We used MATLAB CVX package to solve optimizations in \eqref{eq-op-re-con} and \eqref{eq-opt-sdp} \cite{cvx}. All the experiments are conducted under identical conditions
under Matlab R2018a on a PC equipped with an operating
system of Windows 10 64-bit, an Intel i7-6820HQ 2.70GHz
CPU, and a 8GB RAM. Throughout all the experiments, we define signal-to-noise ratio (SNR) (in dB) at the $m$-th node as
\par\noindent\small
$$\mathrm{SNR}_m = 10\log_{10}\frac{|\alpha_m|^2 \|\SV(\tau_m)\|^2}{\sigma_m^2}.$$
\normalsize

\noindent\textbf{One-bit time-delay estimation}: For $100$ digital samples obtained at the Nyquist rate, i.e. $L=100$ and $\vartheta=1$,  Fig.~\ref{fig:NRMSEvsSNR} shows the normalized root-mean-squared-error (N-RMSE) of the time-delay estimates, computed over $1000$ Monte Carlo trials,
with respect to SNR. This \textit{estimation N-RMSE} is $\frac{\sqrt{\sum_{j=1}^J(\widehat{\tau}_{m,j}-\tau_m)^2}}{\tau_m J}$ where $\widehat{\tau}_{m,j}$
denotes the time-delay estimate at the $j$-th Monte Carlo trial and $J$ is the number of Monte Carlo trials. We assume $s(t)$ to be $\pi/2$-BPSK-modulated with a raised cosine shaping filter of the bandwidth $180$ KHz and the roll-off factor $1$. The temporal threshold $\GAMMA_m$ is randomly drawn from a uniform distribution with support $[-A_{\rm max},A_{\rm max}]$, where $A_{\rm max}$ denotes the maximum amplitude of the received signal at NB-IoT nodes. We observe that to achieve the same N-RMSE, the SNR should be about $5$ dB higher for one-bit processing than the full-precision case.\\

\noindent\textbf{Effect of oversampling}: As discussed in Section \ref{sec:oversampling}, oversampling compensates the performance loss arising from the one-bit quantization scheme. Fig.~\ref{fig:NRMSEvsK} shows the N-RMSE of the time-delay estimates versus the oversampling factor, i.e., $\vartheta$, at ${\rm SNR}=-5$ dB. As predicted in theory, the N-RMSE of oversampled one-bit processing with $\vartheta=5$ approaches that of the full-precision processing.\\
\begin{figure}[t]
\centering
\includegraphics[width=0.72\columnwidth]{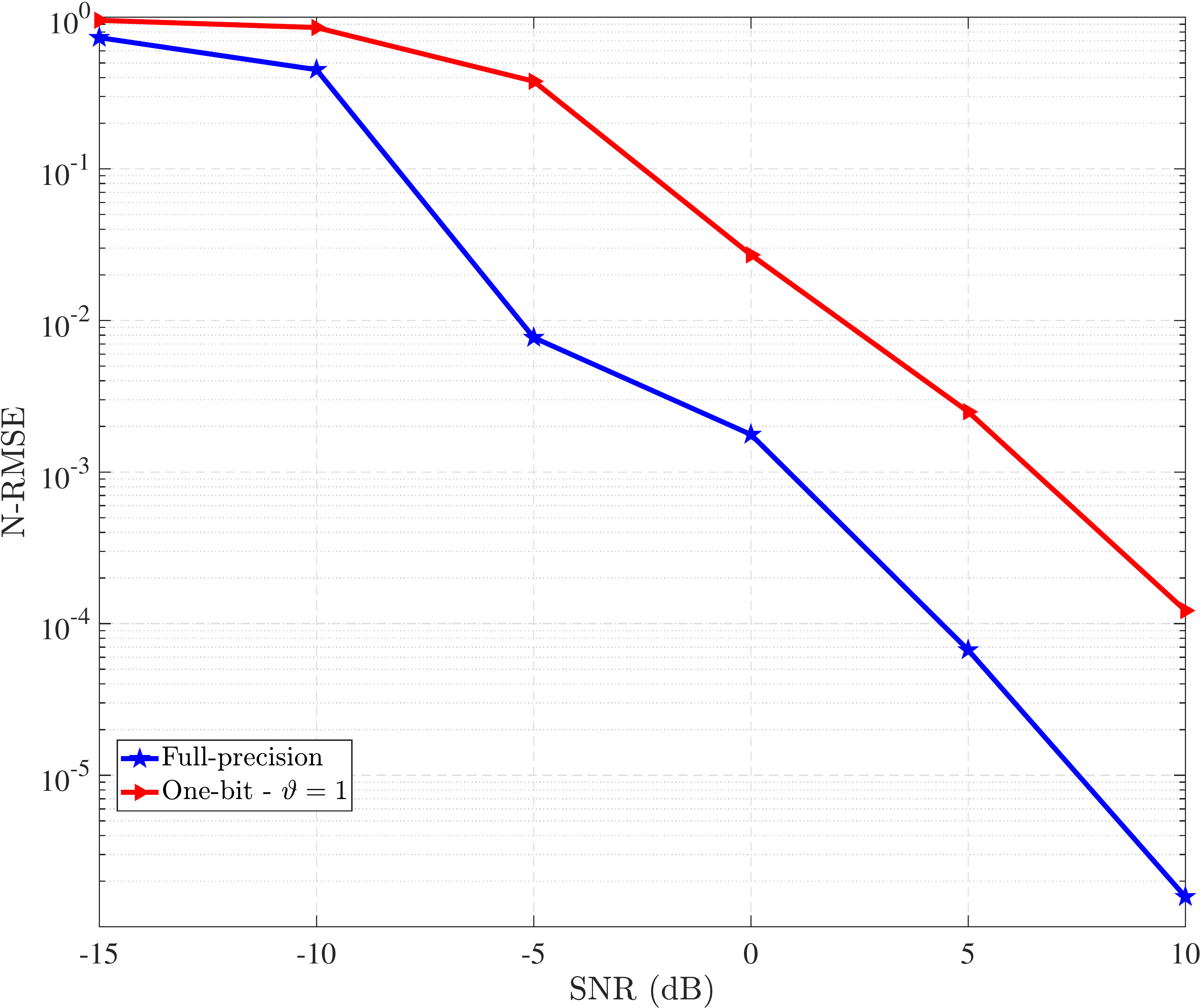}
\DeclareGraphicsExtensions.
\caption{N-RMSE of the time-delay estimates versus the SNR with $L=100$ and $\vartheta=1$. The signal $s(t)$ is a $\pi/2$-BPSK modulated signal with bandwidth $B=180$ KHz.}
\label{fig:NRMSEvsSNR}
\end{figure}
\begin{figure}[t]
\centering
\includegraphics[width=0.72\columnwidth]{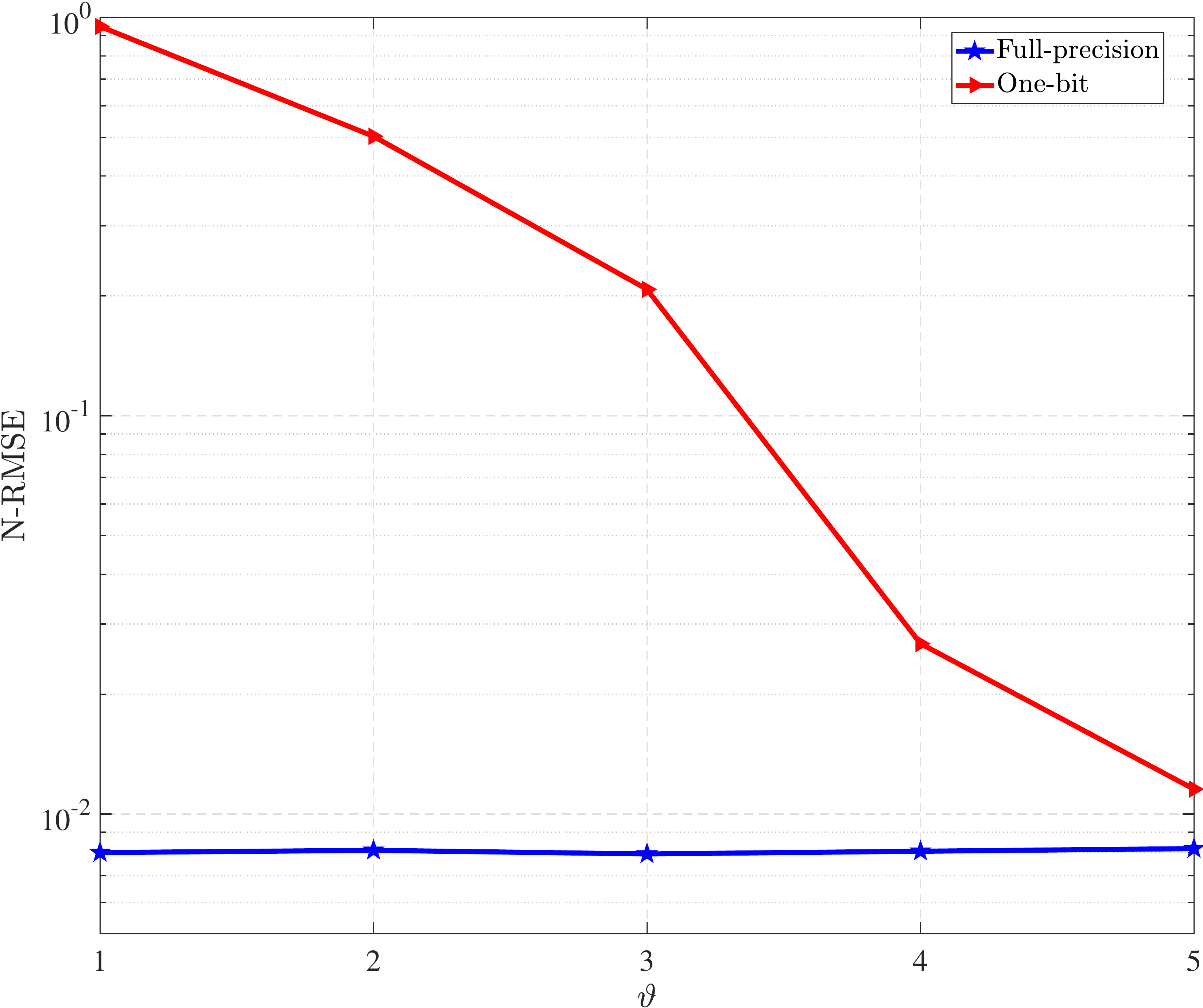}
\DeclareGraphicsExtensions.
\caption{N-RMSE of the time-delay estimates versus the the oversampling factor $\vartheta$ with $L=100$ and ${\rm SNR}=-5$ dB. The signal $s(t)$ is a $\pi/2$-BPSK modulated signal with bandwidth $B=180$ KHz.}
\label{fig:NRMSEvsK}
\end{figure}

\noindent\textbf{Localization with different node geometries}: Next, we investigate our proposed localization method for various node placements. We consider three node geometries: uniform circular (Fig.~\ref{fig:cirshape}), uniform linearly-spaced in an L-shape (Fig.~\ref{fig:Lshap}), and random (Fig.~\ref{fig:Ranshape}). To show the performance over different ranges, we consider the performance of these geometries over small ($[-800~{\rm m}, 800~{\rm m}] \times [-800~{\rm m}, 800~{\rm m}]$), large ($[-2000~{\rm m}, 2000~{\rm m}] \times [-2000~{\rm m}, 2000~{\rm m}]$), and mid-size ($[-1200~{\rm m}, 1200~{\rm m}] \times [-1200~{\rm m}, 1200~{\rm m}]$) areas, respectively. In Fig.~\ref{fig:cirshape}, the nodes were spaced on a circle with radius of $800$ m and the target and the base-station were randomly placed at $[-309~{\rm m}, 287~{\rm m}]$ and $[-208,~{\rm m}, -312~{\rm m}]$ (in $X$-$Y$ Cartesian coordinate system), respectively.
When the nodes were configured in L-shape and randomly, the target was randomly placed at $[371.7~{\rm m}, -338.4~{\rm m}]$ and $[-615.8~{\rm m}, -753.8~{\rm m}]$ and the base station was randomly located at $[-98~{\rm m}, 1112~{\rm m}]$ and $[-87~{\rm m}, 53~{\rm m}]$, respectively.

To consider the impact of the relative distances of the different nodes to the target of interest on the SNR, we generate the SNR at the $m$-th node ($m>1$) as ${\rm SNR}_m = {\rm SNR}_1 (\frac{d_m}{d_1})^2$ where ${\rm SNR}_1$ denotes the SNR at the reference node, which is assumed to be $0$ dB in Figs.~\ref{fig:cirshape}, \ref{fig:Lshap}, and \ref{fig:Ranshape}. The temporal thresholds and $s(t)$ are generated similar to Fig. \ref{fig:NRMSEvsSNR}. The maximum detectable range by NB-IoT nodes, i.e., $r_{\rm max}$, was considered to be $4000$ m. The positive thresholds $\lambda_m$'s were randomly drawn from $8$ predetermined values over the interval $(0, r_{\rm max}]$. These thresholds are encoded with $3$ bits and transmitted to the FC along with one-bit range information. 

Our ANTARES algorithm estimates the target location with errors of $22.89$, $23.87$, and $21.52$ m for circular, L-shape, and random geometries, respectively. This is very close to that of the optimal method given in Theorem \ref{theorem-2}, wherein the corresponding errors are $6$, $9.4$, and $7.81$ m, respectively; the errors in the full-precision methods are $1$ m, $1.2$, and $1.06$ m, respectively. This indicates the robustness of our method against distribution in of NB-IoT nodes. In order to draw a comparison between the computational complexities of ANTARES and the optimal method, we take account of their corresponding run-times for the investigated scenarios in Figs.~\ref{fig:cirshape}, \ref{fig:Lshap}, and \ref{fig:Ranshape}, which are, respectively, $3.27$ s, $3.63$ s, and $3.91$ s for ANTARES besides $81.39$ s, $88.53$ s, and $85.74$ s for the optimal method. This implies that ANTARES is considerably more computationally efficient than the optimal method in Theorem \ref{theorem-2}.
\begin{figure}[t]
\centering
\includegraphics[width=0.72\columnwidth]{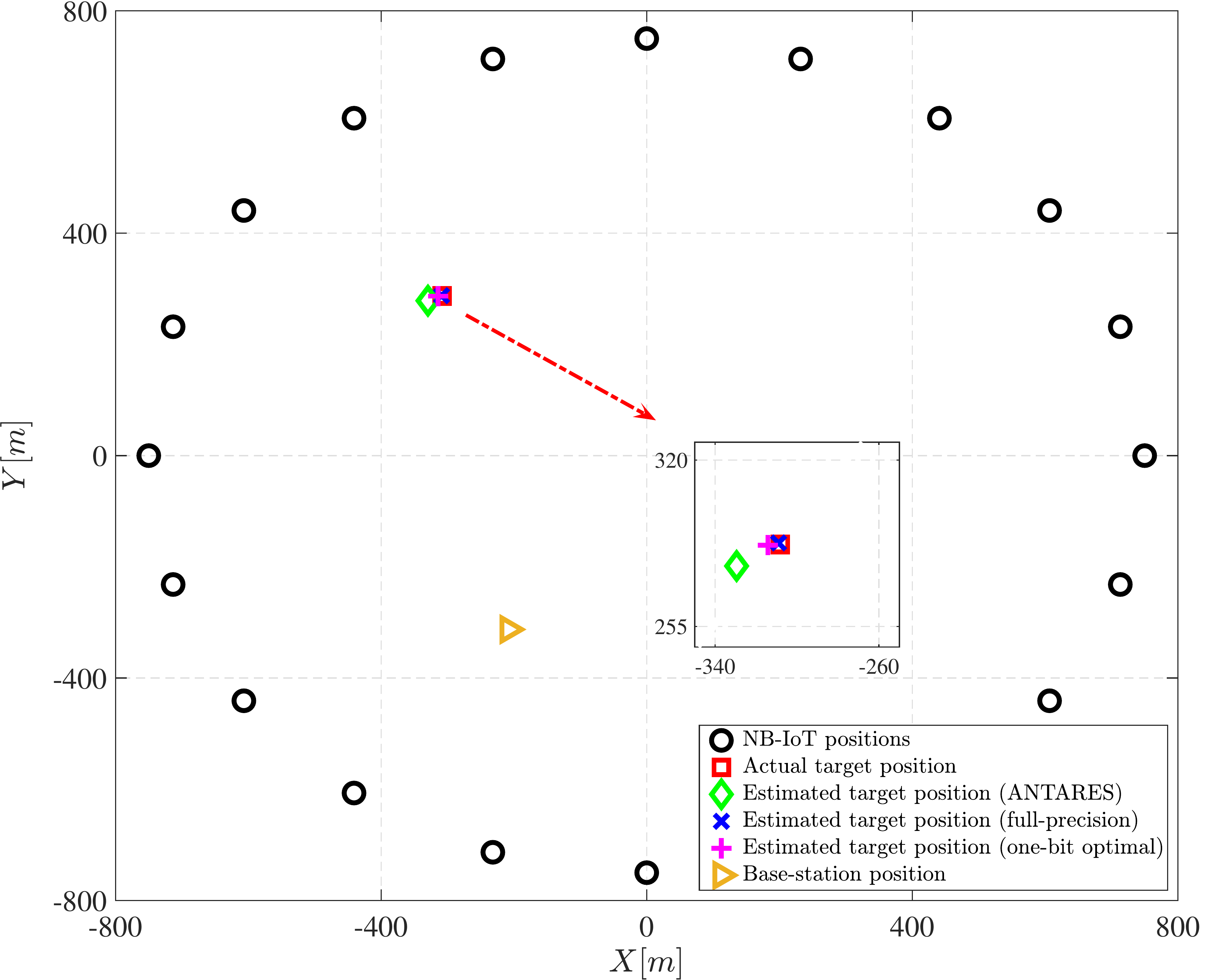}
\DeclareGraphicsExtensions.
\caption{Localization with $M=20$ NB-IoT nodes (black circles) uniformly spaced on a circle with radius of $800$ m. The target-of-interest is randomly placed at $(-309~{\rm m}, 287~{\rm m})$. The SNR at all the NB-IoT nodes is $0$ dB. 
}
\label{fig:cirshape}
\end{figure}
\begin{figure}[t]
\centering
\includegraphics[width=0.72\columnwidth]{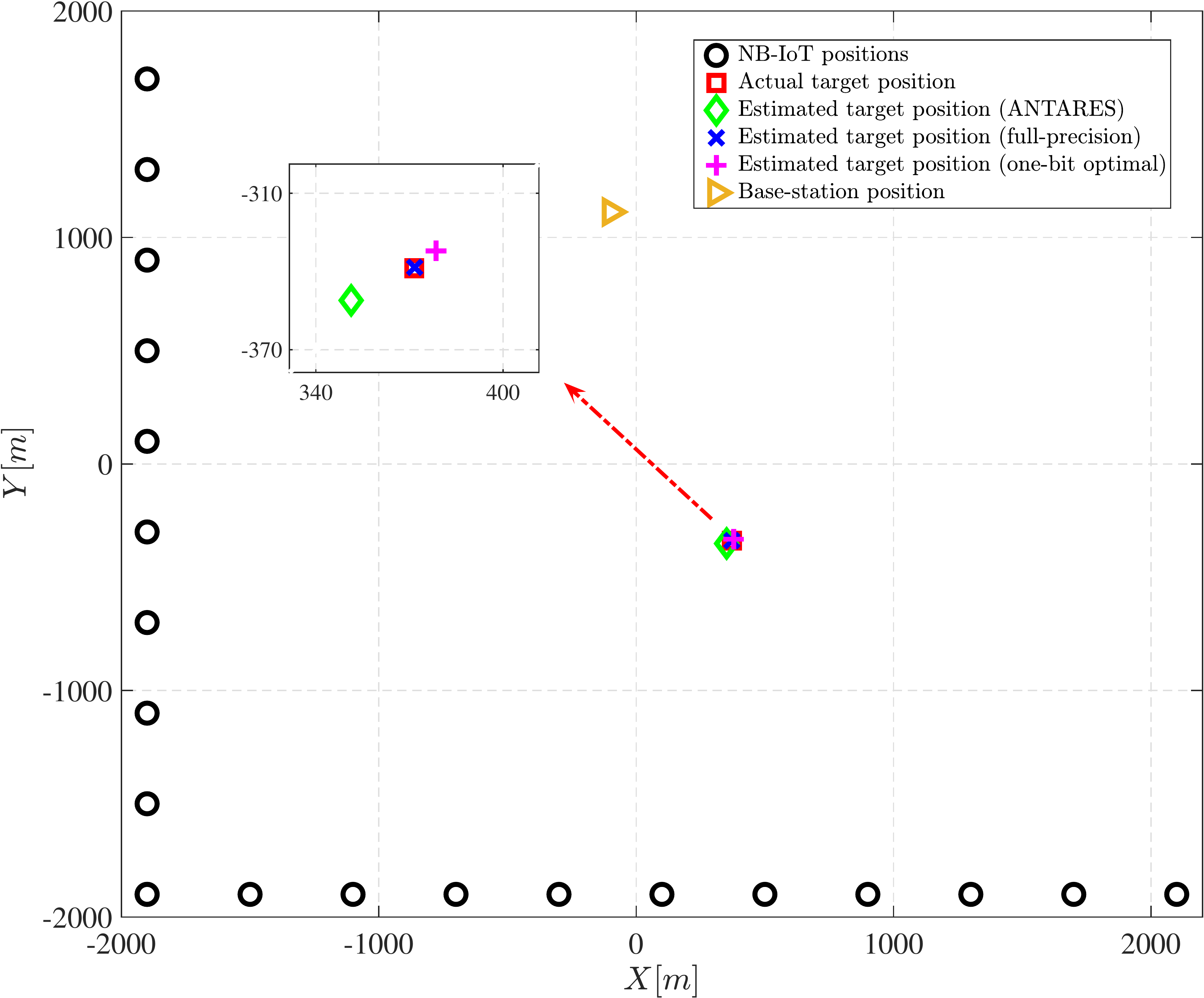}
\DeclareGraphicsExtensions.
\caption{Localization with $M=20$ NB-IoT nodes (black circles) linearly spaced in an L-shape. The target-of-interest is randomly placed at $(371~{\rm m}, -338~{\rm m})$. The SNR at the $m$-th node ($m>1$) is ${\rm SNR}_m = {\rm SNR}_1 \left(\frac{d_m}{d_1}\right)^2$ with ${\rm SNR}_1=0$ dB.}
\label{fig:Lshap}
\end{figure}
\begin{figure}[t]
\centering
\includegraphics[width=0.72\columnwidth]{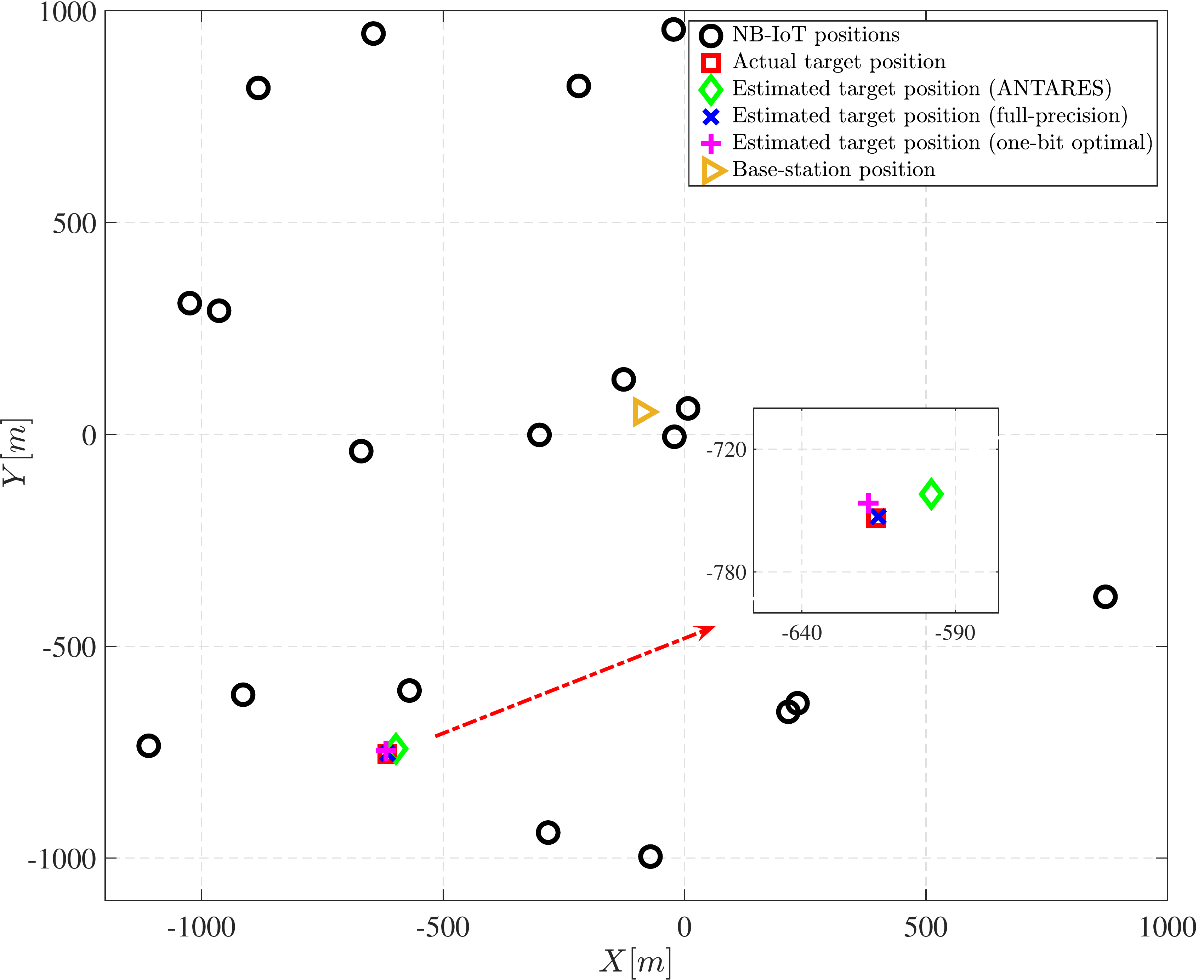}
\DeclareGraphicsExtensions.
\caption{Localization with $M=20$ NB-IoT nodes (black circles) randomly distributed over the area $[-1200~{\rm m}, 1200~{\rm m}] \times [-1200~{\rm m}, 1200~{\rm m}]$. The target-of-interest is randomly placed at $(1160~{\rm m}, -340~{\rm m})$. The SNR at the $m$-th node ($m>1$) is ${\rm SNR}_m = {\rm SNR}_1 \left(\frac{d_m}{d_1}\right)^2$ with ${\rm SNR}_1=0$ dB.}
\label{fig:Ranshape}
\end{figure}

Next, for the random geometry, we show the effect of decreasing ${\rm SNR}_1$ to $-5$ dB (Fig.~\ref{fig:RanshapeSNR0}). The error with ANTARES algorithm now degrades to $59.85$ m compared to $12.4$ and $3.4$ m observed in the optimal and full-precision approaches.
\\
\begin{figure}[t]
\centering
\includegraphics[width=0.72\columnwidth]{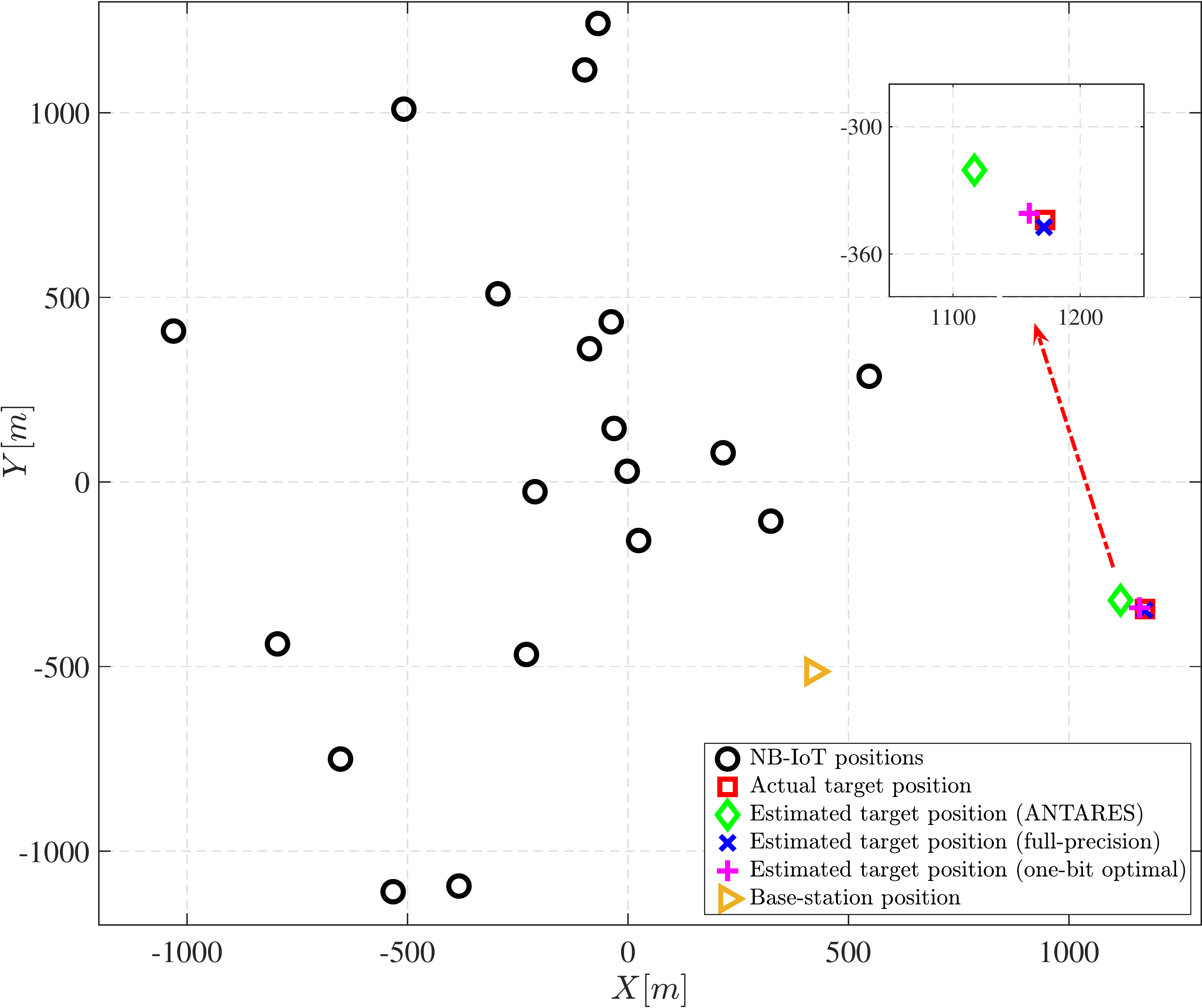}
\DeclareGraphicsExtensions.
\caption{Localization with $M=20$ NB-IoT nodes (black circles) randomly distributed within the area $[-1200~{\rm m}, 1200~{\rm m}] \times [-1200~{\rm m}, 1200~{\rm m}]$. The target-of -nterest is randomly placed at $(-618~{\rm m}, -338~{\rm m})$. The SNR at the $m$-th node ($m>1$) is ${\rm SNR}_m = {\rm SNR}_1 \left(\frac{d_m}{d_1}\right)^2$ with ${\rm SNR}_1=-5$ dB.}
\label{fig:RanshapeSNR0}
\end{figure}

\noindent\textbf{Statistical performance}: Figs.~\ref{fig:mseVsM-a} illustrates the \textit{localization N-RMSE}, i.e. N-RMSE in the estimation of the target location, with respect to the number NB-IoT nodes $M$,  defined as
$
\frac{\sqrt{\sum\limits_{j=1}^J (\delta^x-\widehat{\delta}^x_j)^2 + (\delta^y-\widehat{\delta}^y_j)^2 }}{J\sqrt{ {\delta^x}^2 + {\delta^y}^2 }}$, 
where $[\widehat{\delta}^x_j, \widehat{\delta}^y_j ]^T$
denotes the target location estimate at the $j$-th Monte Carlo trial and $J$ is the number of Monte Carlo trials.
Figs.~\ref{fig:mseVsM-a} plots the normalized-root-localization-CRB, i.e., $\sqrt{\frac{[\mathbf{I}^{-1}(\QV)]_{1,1}+[\mathbf{I}^{-1}(\QV)]_{2,2}}{ {\delta^x}^2 + {\delta^y}^2 }}$ where $\mathbf{I}(\QV)$ is specified in Section \ref{sec:crb}.
The nodes and targets were placed randomly over ($[-800~{\rm m}, 800~{\rm m}] \times [-800~{\rm m}, 800~{\rm m}]$) area during each of the $200$ Monte Carlo trials. The SNR at the $m$-th node ($m>1$) is assumed to be ${\rm SNR}_m = {\rm SNR}_1 \left(\frac{d_m}{d_1}\right)^2$ with ${\rm SNR}_1= -2$ dB. Further, the temporal thresholds, $s(t)$ and $\lambda_m$'s are generated similar to Figs. \ref{fig:NRMSEvsSNR} and \ref{fig:Lshap}. We observe that the N-RMSEs of the proposed optimal and ANTARES methods improve with increase in $M$. The N-RMSE for the optimal method is very close to the normalized root of the CRB and it approaches to that of the full-precision when $M>80$. It is also seen that the normalized CRB tends to the N-RMSEs of the full-precision at the high number of sensors. In addition, Fig~\ref{fig:mseVsM-b} shows the relative N-RMSE, namely the difference in N-RMSE of the optimal and ANTARES methods as well as the normalized CRB relative to that of full-precision. We observe that the relative N-RMSE rises by $2.2$\%, $0.6$\% and $0.3$\% in case of ANTARES, optimal methods and the CRB, respectively, over the full-precision approach when $M=20$. The observed difference in the estimation performance of ANTARES and optimal approaches arises from the fact that the alternating approach employed for ANTARES is guaranteed to converge to only a local minimum of the optimization problem in (44) \cite{bezdek2003convergence}, while the optimal method always provides the global minimum of (44).

The temporal thresholds were randomly generated in all experiments. Comparing the localization accuracy in Figs. \ref{fig:cirshape}-\ref{Fig5} show that variations in temporal thresholds do not have considerable influence on the overall localization performance.
\begin{figure}[!t]
\centering
\subfloat[]{\includegraphics[width=0.72\columnwidth]{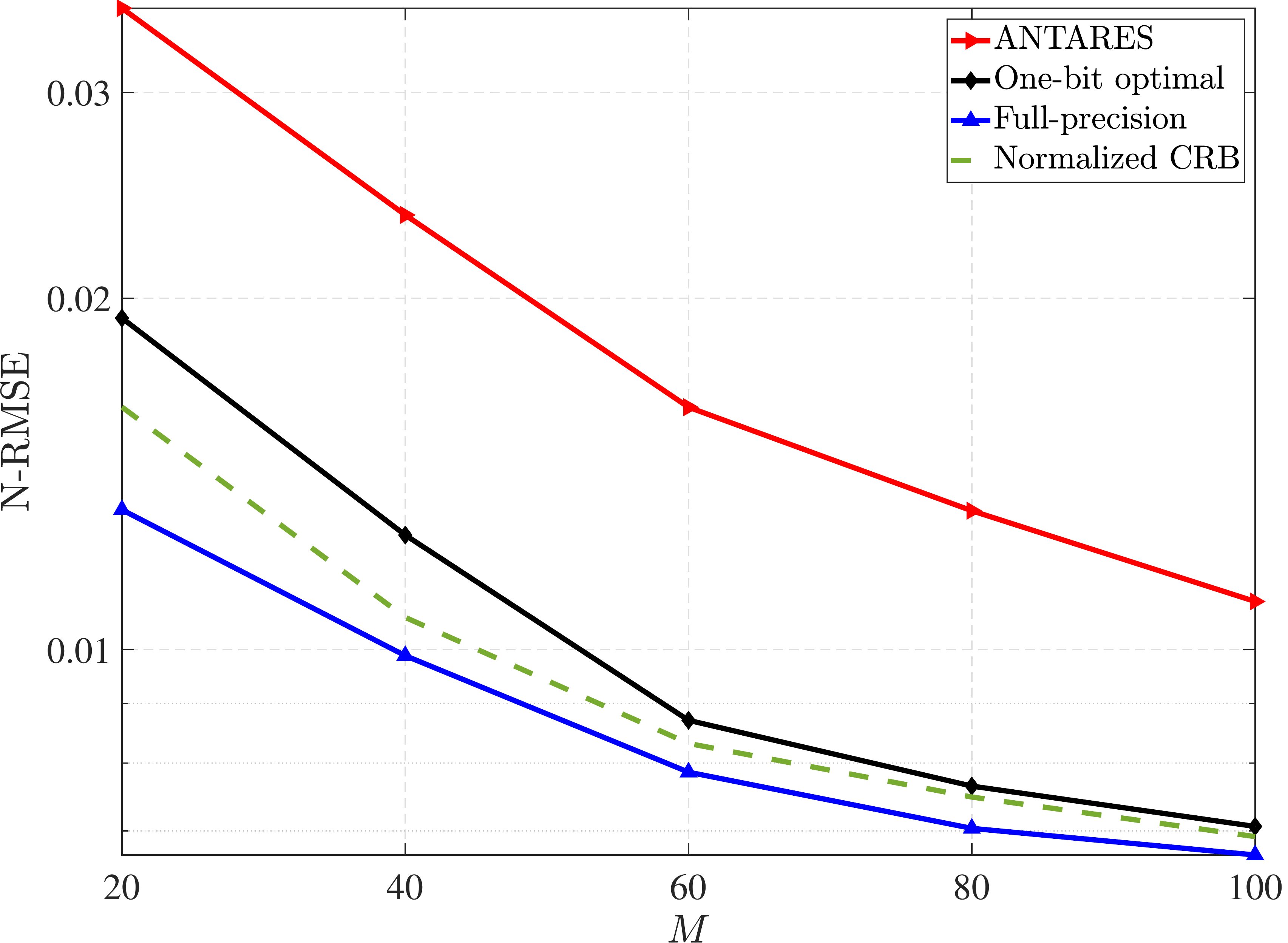}%
\label{fig:mseVsM-a}}
\\
\subfloat[]{\includegraphics[width=0.72\columnwidth]{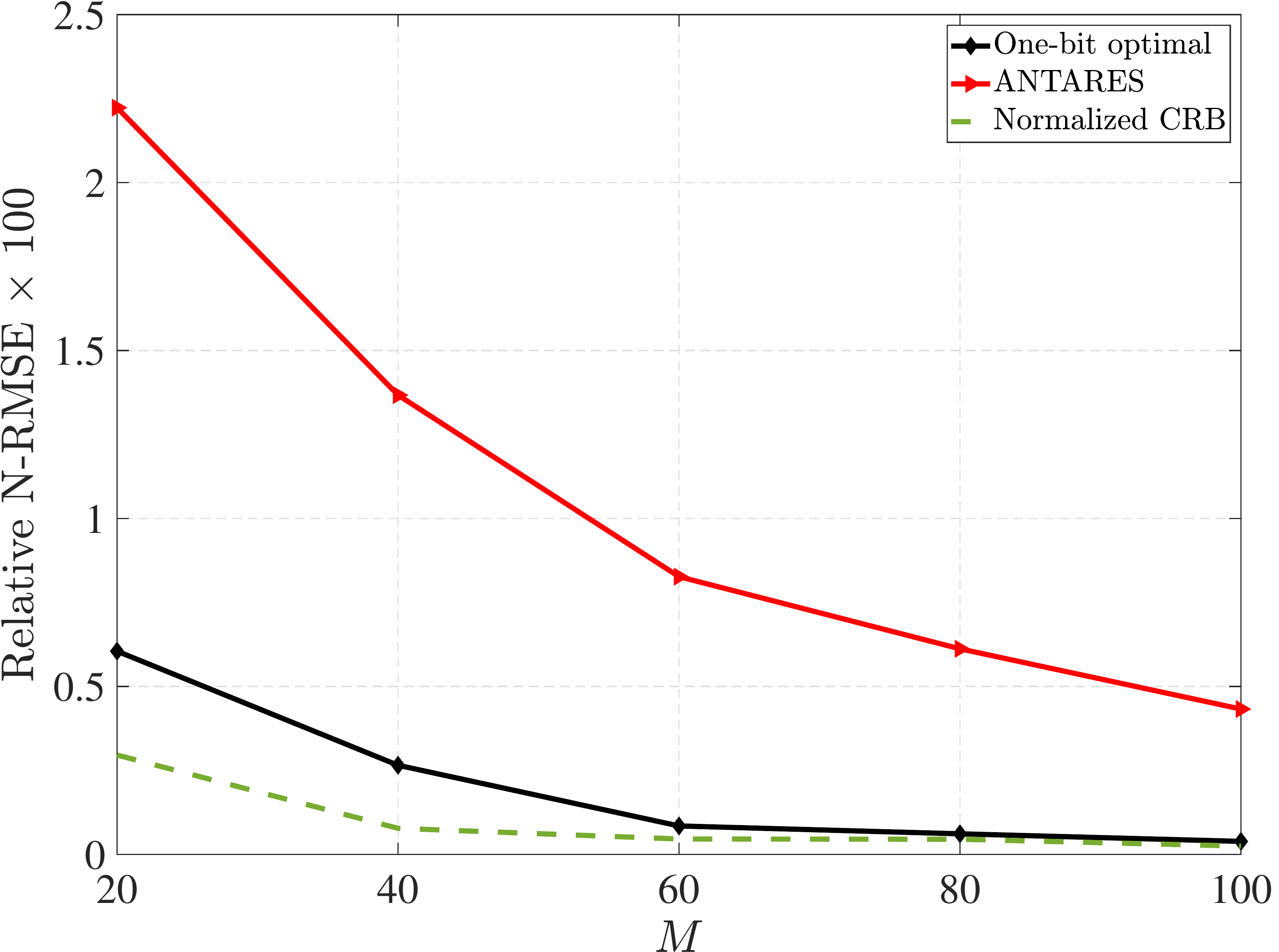}%
\label{fig:mseVsM-b}}
\caption{(a) N-RMSE and (b) Relative N-RMSE in the estimated target location with respect to the number of IoT devices $M$. The SNR at the $m$-th node ($m>1$) is ${\rm SNR}_m = {\rm SNR}_1 \left(\frac{d_m}{d_1}\right)^2$ with ${\rm SNR}_1= -2$ dB.}
\label{Fig5}
\end{figure}
%
\section{Summary}
\label{sec:summ}
In summary, the one-bit sampling offers an attractive solution to the challenges posed by the NB-IoT for location-based services. The one-bit samplers are integral to developing low cost and low power devices. We proposed a one-bit passive sensor array formulation to estimate the time-of-arrival in an NB-IoT network. The quantized samples of the estimates are then forwarded to an FC. We propose a novel method that casts the localization problem from aggregated quantized nodal estimates as a multivariate fractional optimization problem that we solve using the optimal Lasserre's SDP relaxation. We also propose the ANTARES algorithm as an alternative sub-optimal method with reduced computational complexity compared to Lasserre's. Our approach is helpful in addressing the problem of maintaining high localization accuracy while deploying reduced-rate ADCs at the nodes as well as limited-capacity NB-IoT links.
\appendices

\section{Proof of Lemma \ref{lem-1-2}}
\label{App-A-A}
The optimization problems \eqref{eq-optimization} and \eqref{eq-op-re-con} are equivalent. Hence, it suffices to prove this for only \eqref{eq-optimization}. 
Define
\begin{align}
 \mathbf{y}_m^\circ = \widetilde{\alpha}_m^\circ \SV(\widetilde{\tau}_m^\circ) + \alpha_m^\circ \SV(\tau_m^\circ) + \NV_m^\circ,   
\end{align}
where $[\widetilde{\alpha}_m^\circ, \alpha_m^\circ, \widetilde{\tau}_m^\circ, \tau_m^\circ, \NV_m^\circ]^T \neq [\widetilde{\alpha}_m, \alpha_m, \widetilde{\tau}_m, \tau_m, \NV_m]^T$ and thus, $\mathbf{y}_m^\circ \neq \mathbf{y}_m$.
It suffices to show that $[Q(\YV_m^\circ)]_l \neq [Q(\YV_m)]_l$ at least for one $l$ as $L \to \infty$. The previous statement holds only if, at least for one $l$, the following occurs:
\begin{align}\label{eq-App-A-A-1}
\left\{\begin{array}{ll}
\RE\{[\YV_m]_l\} > \RE\{[\GAMMA_m]_l\} > \RE\{[\YV_m^\circ]_l\}, ~~{\rm or}, \\
\RE\{[\YV_m]_l\} < \RE\{[\GAMMA_m]_l\} < \RE\{[\YV_m^\circ]_l\},  ~~{\rm or}, \\
\IM\{[\YV_m]_l\} > \IM\{[\GAMMA_m]_l\} > \IM\{[\YV_m^\circ]_l\}, ~~{\rm or}, \\
\IM\{[\YV_m]_l\} < \IM\{[\GAMMA_m]_l\} < \IM\{[\YV_m^\circ]_l\}.
\end{array}\right.
\end{align}
Let ${\cal A}$ denote the event described by \eqref{eq-App-A-A-1} for a given $l$. In practice, the real and imaginary parts of $[\YV_m]_l$ and $[\YV_m^\circ]_l$ are upper bounded by, say, $A_{\rm max}$.
Then, probability of ${\cal A}$ is \cite{gianelli2016one}
\par\noindent\small
\begin{align}
{\rm Pr}({\cal A}) = \frac{|\RE\{[\YV_m]_l\} -\RE\{[\YV_m^\circ]_l\}|}{2A_{\rm max}} + \frac{|\IM\{[\YV_m]_l\} -\IM\{[\YV_m^\circ]_l\}|}{2A_{\rm max}}.  
\end{align}
\normalsize
The probability that \eqref{eq-App-A-A-1} occurs at least for one $l$, denoted by ${\cal H}$, is
\par\noindent\small
\begin{align}
 {\rm Pr}({\cal H}) = 1 - \prod_{l=1}^{L}& \bigg(1 -  \frac{|\RE\{[\YV_m]_l\} -\RE\{[\YV_m^\circ]_l\}|}{2A_{\rm max}} \nonumber\\
 &- \frac{|\IM\{[\YV_m]_l\} -\IM\{[\YV_m^\circ]_l\}|}{2A_{\rm max}}\bigg).
\end{align}
\normalsize
From \cite{gianelli2016one}, $1-x \leq e^{-x}, \forall x \in \mathds{R}$. Hence, it follows that
\par\noindent\small
\begin{align}
 {\rm Pr}({\cal H}) \geq 1 - e^{-\sum_{l=1}^{L}\frac{|\RE\{[\YV_m]_l\} -\RE\{[\YV_m^\circ]_l\}|}{2A_{\rm max}} - \frac{|\IM\{[\YV_m]_l\} +\IM\{[\YV_m^\circ]_l\}|}{2A_{\rm max}}}.
\end{align}
\normalsize
But $\YV_m^\circ \neq \YV_m$. Thus, $-\sum_{l=1}^{L}\frac{|\RE\{[\YV_m]_l\} -\RE\{[\YV_m^\circ]_l\}|}{2A_{\rm max}} - \frac{|\IM\{[\YV_m]_l\} +\IM\{[\YV_m^\circ]_l\}|}{2A_{\rm max}} \to \infty$ as $L \to \infty$, and
    $\lim_{L \to \infty }  {\rm Pr}({\cal H}) = 1$. 
This implies that $\YV_m$ is the only point which satisfies the constraints in \eqref{eq-optimization} as $L \to \infty$. Accordingly, as $L \to \infty$, the optimization problem \eqref{eq-optimization} reduces to the LASSO estimator which has been shown to be consistent \cite{chatterjee2011strong}. This completes the proof.
\section{Proof of Theorem\ref{theorem-1}}
\label{App-A}
To show that \eqref{eq:jsdp} is equivalent to \eqref{eq:fracopt}, we first prove that the global minimum of \eqref{eq:fracopt} coincides with that of \eqref{eq:jsdp}. Assume that $\RV^\star_{o}$ and $\begin{bmatrix} \RV^{\star T}_{e} & v^\star \end{bmatrix}^T$ are the minimizers of \eqref{eq:fracopt} and \eqref{eq:jsdp}, respectively.
Define a set
$
\KSET = \left\{ \RV \in \mathds{R}_{ \geq 0}^{M} ~|~  \WV \odot (\RV - \LAMBDA) \succeq \ZEROVV \right\}.
$
Given ${\cal J}(\RV) \geq 0$ for $\RV \in \KSET$, it readily follows from the first constraint in \eqref{eq:jsdp} that $\dfrac{{\cal F}(\RV^\star_{e})}{{\cal J}(\RV^\star_{e})} \leq v^\star$. Considering that $\RV^\star_{e}$ belongs to the feasible set of \eqref{eq:fracopt}, i.e., $\RV^\star_{e} \in \KSET$, we obtain
\par\noindent\small
\begin{align}
\label{eq-27}
\dfrac{{\cal F}(\RV^\star_{o})}{{\cal J}(\RV^\star_{o})} \leq \dfrac{{\cal F}(\RV^\star_{e})}{{\cal J}(\RV^\star_{e})} \leq v^\star.
\end{align}
\normalsize
On the other hand, defining $v_o = \frac{{\cal F}(\RV^\star_{o})}{{\cal J}(\RV^\star_{o})}$ and considering $\RV^\star_o \in \KSET$, it follows that $\begin{bmatrix} \RV^{\star T}_{o} & v_o\end{bmatrix}^T$ is in the feasible set of \eqref{eq:jsdp}. Therefore, \par\noindent\small
\begin{align}
    \label{eq-28}
v^\star \leq v_{o} =  \dfrac{{\cal F}(\RV^\star_{o})}{{\cal J}(\RV^\star_{o})},
\end{align}
\normalsize
Now, comparing \eqref{eq-27} and \eqref{eq-28} implies that \eqref{eq:fracopt} and \eqref{eq:jsdp} share the same global minimum, i.e.,
\par\noindent\small
\begin{align}
\label{eq-29}
 v^\star =  \dfrac{{\cal F}(\RV^\star_{o})}{{\cal J}(\RV^\star_{o})} .
\end{align}
\normalsize
Further deduction from \eqref{eq-27} and \eqref{eq-29} yields
\par\noindent\small
\begin{align}
  \dfrac{{\cal F}(\RV^\star_{o})}{{\cal J}(\RV^\star_{o})} = \dfrac{{\cal F}(\RV^\star_{e})}{{\cal J}(\RV^\star_{e})},
\end{align}
\normalsize
indicating $\RV^\star_e$ is also a minimizer of \eqref{eq:fracopt}. This completes the proof.
\section{Proof of Theorem \ref{theorem-2}}
\label{App-B}
\subsection{Preliminaries to the Proof}
\label{App-B-1}
Recall the definition of sum-of-squares (SOS) polynomial and a useful related result as follows.
\begin{Def}[Sum-of-squares]
\label{definition-3}
A polynomial ${\cal P}(\UV)$ of degree $2q$ is sum-of-squares (SOS) if and only if there exist polynomials ${\cal Y}_1(\UV), \cdots, {\cal Y}_I(\UV)$ of degree $q$ such that ${\cal P}(\UV) = \sum_{i=1}^I {\cal Y}^2_i(\UV)$.
\end{Def}
\begin{lem}
\label{lemma-1}
Given $\mathds{P}$ as the set of SOS polynomials and polynomials ${\cal E}_i(\UV)$ for $1 \leq i \leq I$, define the sets
\par\noindent\small
\begin{align}
\label{eq-compset}
 \mathds{W} &\!=\!\{\UV \in \mathds{R}^n \mid {\cal E}_i(\UV) \geq 0, \forall i \in \{1,2 ,\cdots, I\} \}  \\
\label{eq-positset}
\mathds{G}_p & \!=\! \Scale[0.86]{\bigg\{ \sum_{i=0}^I {\cal P}_i(\UV) {\cal E}_i(\UV) \mid {\cal E}_0(\UV) \!=\! 1, 
{\cal P}_i(\UV) \! \in \! \mathds{P}, \deg\left({\cal P}_i(\UV) {\cal E}_i(\UV)\right) \! \leq \! 2p \bigg\},}
\end{align}
\normalsize
such that $\mathds{W}$ is compact and there exists a polynomial ${\cal U}(\UV) \in \mathds{G}_p$ where $\{\UV \in \mathds{R}^n \mid {\cal U}(\UV) \geq 0\}$ is compact. Then, a polynomial ${\cal B}(\UV)$ of degree $q$ is strictly positive on $\mathds{W}$, i.e., ${\cal B}(\UV) >0$ $\forall \UV \in \mathds{W}$, if and only if ${\cal B}(\UV) \in \mathds{G}_p$ for some integer $p \geq \max \left(\lceil q \rceil, \underset{i}{\max} \left\lceil \frac{\deg({\cal E}_i)}{2} \right\rceil \right)$.
\end{lem}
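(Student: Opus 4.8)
The plan is to recognize this statement as the Archimedean representation theorem for the quadratic module generated by the constraint polynomials ${\cal E}_i$ — that is, Putinar's Positivstellensatz read at the level of the truncations $\mathds{G}_p$. I would therefore split the claimed equivalence into the routine implication (membership in $\mathds{G}_p$ $\Rightarrow$ nonnegativity on $\mathds{W}$) and the substantive one (strict positivity on $\mathds{W}$ $\Rightarrow$ membership in some $\mathds{G}_p$), and I would treat the hypothesis on ${\cal U}$ as the device that supplies the Archimedean property, which is what makes the representation possible.

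First I would dispose of the easy direction. If ${\cal B} = \sum_{i=0}^I {\cal P}_i {\cal E}_i$ with ${\cal E}_0 = 1$ and each ${\cal P}_i \in \mathds{P}$, then at any $\UV \in \mathds{W}$ every factor ${\cal E}_i(\UV) \geq 0$ by definition of $\mathds{W}$ and every ${\cal P}_i(\UV) \geq 0$ because SOS polynomials are pointwise nonnegative; hence ${\cal B}(\UV) \geq 0$ on $\mathds{W}$. Note that membership yields nonnegativity rather than strict positivity, so this direction is to be understood with the usual convention identifying the representable cone with (the closure of) the strictly positive polynomials.

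Then comes the hard direction. Here I would first convert the hypothesis on ${\cal U}$ into the Archimedean condition by invoking the standard criterion that a quadratic module containing a polynomial whose nonnegativity set is compact is Archimedean; applied to ${\cal U} \in \bigcup_p \mathds{G}_p$, together with compactness of $\mathds{W}$, this produces a constant $N$ with $N - \|\UV\|_2^2$ in the module. With this in hand I would argue by contradiction: assuming ${\cal B}$ is not in the module, a Hahn--Banach separation on the space of polynomials gives a linear functional $L$ with $L({\cal B}) \leq 0$ yet $L \geq 0$ on every generator, and the Archimedean condition forces $L$ to be represented, via Haviland's theorem and the solution of the $\mathds{W}$-moment problem, by a positive Borel measure $\mu$ supported in $\mathds{W}$. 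Integrating then yields $\int_{\mathds{W}} {\cal B}\, d\mu = L({\cal B}) \leq 0$, contradicting ${\cal B} > 0$ on the support; in practice one may simply cite Putinar's theorem \cite{lasserre2001global} as the black box delivering the representation ${\cal B} = \sum_{i=0}^I {\cal P}_i {\cal E}_i$. Since this representation is a finite sum of products of polynomials, it has some finite total degree, and taking $p \geq \max(\lceil q \rceil, \max_i \lceil \deg({\cal E}_i)/2 \rceil)$ guarantees that both ${\cal B}$ and each generator ${\cal E}_i$ (paired with a constant SOS multiplier) already fit inside $\mathds{G}_p$, so the representation is realized at level $p$.

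The main obstacle is the representation step itself: Putinar's theorem is genuinely nontrivial and rests on moment-problem and spectral-theoretic machinery, and the Archimedean hypothesis must be extracted cleanly from the stated condition on ${\cal U}$ rather than from compactness of $\mathds{W}$ alone, which is insufficient in general. A secondary subtlety is that the theorem provides existence of a finite truncation order $p$ but no sharp bound, so the quantitative clause should be read as the existence of such a $p$ consistent with the stated lower bound.
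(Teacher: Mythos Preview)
Your sketch is mathematically sound: you correctly identify the lemma as Putinar's Positivstellensatz, you dispose of the trivial direction (and rightly flag that membership in $\mathds{G}_p$ only guarantees nonnegativity, so the ``if and only if'' is slightly loose as stated), and for the substantive direction you give the standard separation/moment-problem outline while correctly isolating the role of the hypothesis on ${\cal U}$ as the Archimedean certificate.

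However, the paper does not prove this lemma at all: its entire proof consists of the single sentence ``We refer the reader to \cite{nie2007complexity}.'' So there is nothing to compare at the level of argument---you have supplied a genuine proof sketch where the paper simply cites the literature. Your approach is exactly the one underlying the cited result, so in spirit you are aligned with what the reference would contain; you have just unpacked it rather than delegated it.
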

\begin{IEEEproof}
We refer the reader to \cite{nie2007complexity}.
\end{IEEEproof}
\subsection{Proof of the Theorem}
\label{App-B-2}
We first show that \eqref{eq:jsdp} satisfies the conditions stated in Lemma~\ref{lemma-1} of Appendix~\ref{App-B-1}. In consequence, it can be reformulated as minimization of a positive polynomial function on a compact set. Lasserre has shown that minimizer of a positive polynomial function on a compact set can be obtained through solving an equivalent SDP \cite[Theorem 4.2]{lasserre2001global}. Thus, we ultimately resort to \cite[Theorem 4.2]{lasserre2001global} to recast the resulting optimization problem as an SDP. 

Consider ${\cal E}_i$'s to be the inequality constraints of \eqref{eq:jsdp}. Then, we need to prove the following three statements:  
\begin{enumerate} 
\item The feasible set of \eqref{eq:jsdp} is compact.
\item A polynomial ${\cal U}([\RV,v]^T) \in \mathds{G}_p$ exists such that $\{\RV \in \mathds{R}^M, v \in \mathds{R} \mid {\cal U}([\RV, v]^T) \geq 0\}$ is compact.
\item The objective function of \eqref{eq:jsdp} is strictly positive on its feasible set.
\end{enumerate}

For the first statement, note that the feasible set contains all of its boundary points and is therefore closed. From Heine-Borel Theorem \cite{rudin1964principles}, to show compactness of the feasible set, it suffices to show that it is bounded. To this end, note the constraint on the value of $\RV$ which is limited by the maximum detectable range $r_{\rm max} \in \mathds{R}_{> 0}$ of the NB-IoT nodes 
so that $r_m \leq r_{\rm max}$ for all $m \in \mathds{M}$. 
This implies that the continuous function $\frac{{\cal F}(\RV)}{{\cal J}(\RV)}$ is bounded on $\mathds{T} = \{\RV \in \mathds{R}^M \mid r_m \leq r_{\rm max},\; \forall m \in \mathds{M}\}$ \cite[Theorem 4.16]{rudin1964principles}. In other words, $\frac{{\cal F}(\RV)}{{\cal J}(\RV)} \leq \varphi$, where $\varphi = \underset{\RV \in \mathds{T}}{\textrm{maximize}} ~~ \frac{{\cal F}(\RV)}{{\cal J}(\RV)}$. The optimization problem in \eqref{eq:jsdp} is indeed a minimization of an upper bound of $\frac{{\cal F}(\RV)}{{\cal J}(\RV)}$, i.e. $v$. Without loss of generality, assume $v \leq v_{\rm max}$ where $v_{\rm max} \geq \varphi$. These practical constraints on $\RV$ and $v$ do not change the solution of \eqref{eq:jsdp} but guarantee boundedness and thereby compactness of the its feasible set. On the other hand, it is possible to show the boundedness of $v$, in turn, entails the boundedness of $\RV$.
To show that, 
let assume $\mathds{B}$ to be an arbitrary subset of $\{1,\cdots,M\}$ and define $\CV$ such that $[\CV]_k = [\RV]_k$ for $k \in \mathds{B}$.
When $v \leq v_{\rm max}$, from \eqref{eq-21} and \eqref{eq-21-1}, we get
\par\noindent\small
\begin{align}
\label{eq-limitconst}
&\lim_{\CV \to \infty} v {\cal J}(\RV) - {\cal F}(\RV) = \nonumber\\ & - \dfrac{1}{4} \lim_{\CV \to \infty} \bigg( \|\Pi^{\perp}_{\VM} (\overline{\RV}-r_1 \ONEVV)\|^2_2 \| \Pi^{\perp}_{\VM} \big[ (\overline{\RV}-r_1 \ONEVV) \odot (\overline{\RV}-r_1 \ONEVV) \big]\|^2_2 \nonumber\\
& - \left( \big[ (\overline{\RV}-r_1 \ONEVV) \odot (\overline{\RV}-r_1 \ONEVV) \big]^T \Pi^{\perp}_{\VM} (\overline{\RV}-r_1 \ONEVV) \right)^2 \bigg),     
\end{align}
\normalsize
Using Cauchy–Schwarz inequality and idempotency of $\Pi^{\perp}_{\VM}$, we have
\par\noindent\small
\begin{align}
\label{eq-CauSchineq}
\|\Pi^{\perp}_{\VM} (\overline{\RV}-r_1 \ONEVV)\|^2_2 \| \Pi^{\perp}_{\VM} \big[ (\overline{\RV}-r_1 \ONEVV) \odot (\overline{\RV}-r_1 \ONEVV) \big]\|^2_2 \geq \nonumber\\
 \left( \big[ (\overline{\RV}-r_1 \ONEVV) \odot (\overline{\RV}-r_1 \ONEVV) \big]^T \Pi^{\perp}_{\VM} (\overline{\RV}-r_1 \ONEVV) \right)^2.   
\end{align}
\normalsize
It follows from \eqref{eq-limitconst} and \eqref{eq-CauSchineq} that, when $v \leq v_{\rm max}$ and as each $r_m$ approaches infinity, the constraint $v {\cal J}(\RV) - {\cal F}(\RV)$ becomes negative. Hence, when $v \leq v_{\rm max}$, to ensure $v {\cal J}(\RV) - {\cal F}(\RV) \geq 0$, the ranges $r_m$, $m \in \mathds{M}$ must be bounded. This implies that $v \leq v_{\rm max}$ is sufficient for the compactness of the feasible set of \eqref{eq:jsdp}. Accordingly, without loss of generality, the optimization problem \eqref{eq:jsdp} becomes
\par\noindent\small
\begin{align}
\label{eq-opt-compact}
\begin{array}{ll}
\underset{v, \RV}{\textrm{minimize}} & v \\
\SUBJTO &  v {\cal J}(\RV) -{\cal F}(\RV) \geq 0, \\
  & \WV \odot
(\RV - \LAMBDA) \succeq \ZEROVV, \\
  & \RV \succeq \ZEROVV,\\
  & v_{\rm max} - v \geq 0,
\end{array}
\end{align}
\normalsize
in which the feasible set is compact. Note that, in practice, the value of $\varphi$ is unknown and, to satisfy the condition $v_{\rm max} \geq \varphi$, $v_{\rm max}$ should be selected sufficiently large. 

For the second statement, consider
\par\noindent\small
\begin{align}
\label{eq-positivstellensatz}
{\cal E}_i([\RV,v]^T) \!=\! \left\{\begin{array}{ll}
  1 & {\rm if} ~~ i=0, \\
  v {\cal J}(\RV) -{\cal F}(\RV), & {\rm if} ~~i=1, \\
   w_{i-1} (r_{i-1} - \lambda_{i-1}), & {\rm if} ~~ i=2, \cdots, M+1, \\
   r_{\Scale[0.5]{i-M-1}}, & {\rm if} ~~ i=M+2, \cdots, 2M+1, \\
  v_{\rm max} - v, & {\rm if} ~~ i=2M+2,
  \end{array}\right.
\end{align}
\normalsize
and that $\mathds{G}_p$ is defined according to \eqref{eq-positset}.
Construct ${\cal P}_i([\RV,v]^T)=0$ for $i=0,1,\cdots, 2M+1$ and ${\cal P}_{2M+2}([\RV,v]^T)=1$. It readily follows that $v_{\rm max} - v = \sum_{i=0}^{2M+2} {\cal P}_i([\RV,v]^T) {\cal E}_i([\RV,v]^T)$, thus $v_{\rm max} -v \in \mathds{G}_p$ with $p \geq 1$. Further, the set $\{v \in \mathds{R} \mid v_{\rm max} -v \geq 0\}$ is closed and bounded and, therefore, compact. This proves the second statement. 

The third statement requires establishing the strict positiveness of the objective on the feasible set of \eqref{eq-opt-compact}, i.e., $\mathds{W} = \{ \RV \in \mathds{R}^M,\; v \in \mathds{R} \mid \RV \succeq \ZEROVV,\; \WV \odot (\RV - \LAMBDA) \succeq \ZEROVV,\; v {\cal J}(\RV) -{\cal F}(\RV) \geq 0,\; v_{\rm max} - v \geq 0
\}$. Considering $a \in \mathds{R}_{ > 0}$ as a constant parameter independent of $\RV$ and $v$, it is always possible to replace $v$ with $v+a$ in the cost function of \eqref{eq-opt-compact} without affecting its solution. Then, it follows from \eqref{eq-16} that $v \geq {\cal L}(\RV)=\frac{{\cal F}(\RV)}{{\cal J}(\RV)} \geq 0$, thereby $v + a > 0$ on $\mathds{W}$ for any constant $a \in \mathds{R}_{ > 0}$.  This proves the third statement.

Consequently, according to Lemma \ref{lemma-1}, \eqref{eq:jsdp} is equivalent to minimization of  the positive function $v+a$ on the compact set $\mathds{W} = \{ \RV \in \mathds{R}^M,\; v \in \mathds{R} \mid {\cal E}_i([\RV,v]^T) \geq 0, \forall i \in \{1,2 ,\cdots, 2M+2\}
\}$ where ${\cal E}_i$'s are given in \eqref{eq-positivstellensatz}.
Now, resorting to \cite[Theorem 4.2]{lasserre2001global}, the resulting minimization problem can be equivalently recast as the SDP in \eqref{eq-opt-sdp}. This completes the proof.

\bibliographystyle{IEEEtran}
\bibliography{main}

\end{document}